\theoremstyle{definition}
\newtheorem{example}{Example}
\newtheorem{lemma}{Lemma}
\title{Divide (CPU Load) and Conquer: Semi-Flexible Cloud Resource Allocation
\thanks{Preprint of the paper accepted at the 22nd IEEE/ACM International Symposium on Cluster, Cloud and Internet Computing (CCGrid 2022), Taormina (Messina), Italy, 2022}
}
\author{
    \IEEEauthorblockN{%
        Bartłomiej Przybylski\IEEEauthorrefmark{1}, Paweł Żuk\IEEEauthorrefmark{2}, Krzysztof Rzadca\IEEEauthorrefmark{3}}
    \IEEEauthorblockA{%
        \emph{Institute of Informatics, University of Warsaw}\\
        Warsaw, Poland\\
        Email: \IEEEauthorrefmark{1}bap@mimuw.edu.pl, \IEEEauthorrefmark{2}p.zuk@mimuw.edu.pl, \IEEEauthorrefmark{3}krzadca@mimuw.edu.pl}
}
\newcommand\ja{\bigstar}
\newcommand\jb{\blacklozenge}
\newcommand\jc{\blacktriangledown}
\begin{document}

\maketitle

\begin{abstract}
Cloud resource management is often modeled by two-dimensional bin packing with a set of items that correspond to tasks having fixed CPU and memory requirements. However, applications running in clouds are much more flexible: modern frameworks allow to (horizontally) scale a single application to dozens, even hundreds of instances; and then the load balancer can precisely divide the workload between them.

We analyze a model that captures this (semi)-flexibility of cloud resource management. Each cloud application is characterized by its memory footprint and its momentary CPU load. Combining the scheduler and the load balancer, the resource manager decides how many instances of each application will be created and how the CPU load will be balanced between them. In contrast to the divisible load model, each instance of the application requires a certain amount of memory, independent of the number of instances. Thus, the resource manager effectively trades additional memory for more evenly balanced load. 

We study two objectives: the bin-packing-like minimization of the number of machines used; and the makespan-like minimization of the maximum load among all the machines. We prove NP-hardness of the general problems, but also propose polynomial-time exact algorithms for boundary special cases. Notably, we show that (semi)-flexibility may result in reducing the required number of machines by a tight factor of $2-\varepsilon$. For the general case, we propose heuristics that we validate by simulation on instances derived from the Azure trace.
\end{abstract}

\begin{IEEEkeywords}
virtual machines, cloud computing, scheduling, resource allocation, load balancing, vertical scaling
\end{IEEEkeywords}

\section{Introduction}
Borg \cite{Borg43438,Borg49065}, Resource Central \cite{RCentral3132772} and other cloud resource managers play a balancing act between the global efficiency --- packing the machines as densely as possible --- and serving quality --- ensuring that individual machines are rarely, if ever, overloaded. A typical cloud application uses two additional layers: (1) a (horizontal) autoscaler (e.g. \cite{rzadca2020autopilot}) that adds or removes instances in response to long-term changes in the application load; and (2) a load balancer (e.g. \cite{patel2013ananta}) that dynamically assigns end-user queries to instances for a shorter-term balance. These two layers typically issue requests to, rather than fully coordinate with, the resource manager.

In this paper, we show that --- by integrating the autoscaler, the resource manager (scheduler) and the load balancer --- the cloud may use resources more efficiently. 
In particular, we coordinate setting the number of instances (autoscaling), their placement on machines (scheduling) and the allocation of load to individual instances. Unlike the standard load balancing, our proposed method balances the load taking into account all the machine's assigned instances (and thus, all the applications).

The contemporary cloud software stack has enormous engineering complexity. Thus, to show the impact of our ideas, instead of a system study which would most probably be infeasible, we take a formal, algorithmic approach based on classic scheduling (which we additionally complement with simulations). This formal approach enables us to show trends and qualitative differences, but it requires some reasonable modeling of the problem. 
As in Google's Borg~\cite{Borg49065},
we focus on two key resources: the operational memory (RAM) and the computational power (CPU).
We model a system hosting multiple \emph{applications}, each processing a certain \emph{load}. 
This load is distributed between the application's \emph{instances}.
For example, in Function as a Service (FaaS), a single function corresponds to our application. The end-user-driven invocations of this function (a stream of HTTP requests) is the application's load. Any invocation can be processed by any machine that has initialized this function (In OpenWhisk \cite{OpenWhisk}, 
the invoker module dynamically claims the invocation from a global queue).
As another example, our application corresponds to a single serving job in the Google's Borg model \cite{Borg43438,Borg49065}; its instances correspond to the job's tasks; and a cluster-level load balancer assigns queries to these tasks.
Finally, to derive tangible formal results, we assume that the memory requirement of an instance does not depend on the load processed by this instance.
While we have no data to back up this assumption, various application classes should behave according to that model, with the memory requirements dominated by the software stack (libraries, etc.), pre-loaded datasets, or static data structures.
We stress that this is not a core assumption (as it could be easily extended to, e.g., a linear function), but rather a standard modeling step that allows us to demonstrate qualitative results.

In this formal model, we analyze two natural combinatorial optimization problems: (1) bin-packing-like minimization of the number of used machines; (2)  makespan-like minimization of the maximum load of any machine.
Bin-packing models applications' requirements as hard constraints. 
This corresponds to, e.g. packing high-priority jobs by their limits in Borg; or, in an IaaS provider, packing VMs by their sizes (as requested by customers) and maintaining strict SLOs with no overcommitment.
In contrast, the makespan-like approach explores the different nature of these resources. A memory requirement of an application cannot be (easily) compressed or throttled (cloud providers do not swap memory to disk)\cite{Borg43438}. Unlike memory, the CPU is compressible --- it can be dynamically throttled. Of course, when throttled, the application slows down, which is tolerable for batch, while should be avoided for serving applications. Thus, the makespan-like approach minimizes the load of the maximally-loaded machine, corresponding to minimization of the maximal throttling. (The ``makespan'' is a metaphor: we assume that applications are executed concurrently.)

The detailed contributions of this paper are the following:
\begin{itemize}
\item We present a combinatorial optimization model of cloud application allocation with load balancing and fixed memory requirements. We formulate two general optimization problems: packing and balancing (Sect.~\ref{sec:pd}).
\item We prove NP-Hardness for both models in the general case. We also show optimal polynomial algorithms for equal requirements. (Sect.~\ref{sec:min-cpu-usage}--\ref{sec:min-cpu-number}).
\item We propose heuristics for the packing problem which take into account applications' semi-flexibility (Sect.~\ref{sec:heur}).
\item We simulate heuristics with instances based on Azure Public Dataset V2 \cite{RCentral3132772} (Sect.~\ref{sec:experiments}). 
\end{itemize}

\section{Problem description}
\label{sec:pd}

In this section, we formally define the optimization problem of cloud application allocation with memory requirements and CPU load balancing as a general integer linear program (ILP). We follow the classic scheduling notation~\cite{brucker1999scheduling}. 

Let us be given $m$ identical machines, each having a memory capacity of $Q \in \mathbb{Z}_+$ (measured in, e.g., bytes) and the CPU capacity of $P \in \mathbb{Z}_+$ (measured in, e.g., vCPUs or Borg's Normalized CPUs~\cite{Borg43438,Borg49065}). 
We assume the machines are identical as cloud providers usually manage a few large groups of homogeneous machines (e.g.\ 4 machine types cover 98\% of machines of a Google's 10,000-machine cluster~\cite{Borg43438} --- we thus solve a separate instance for each of these 4 large groups). Similarly, if a system uses VMs rented from an IaaS provider, it is natural to use a Managed Instance Group that requires all VMs to have the same instance type.

Let us also be given $n$ \emph{applications}. 
Contemporary cloud applications are usually horizontally-scalable: 
multiple \emph{instances} of the same application, placed on multiple machines, jointly process the load of the application (e.g., for serving applications, each instance processes a fraction of the stream of incoming requests).
In cloud, this mechanism is additionally used to increase reliability (e.g.: 3 or 5 always-on instances). While such lower bounds on the number of active instances can be easily incorporated into our approach, they are mostly orthogonal to our results (so we do not discuss them further).

There is a cost of maintaining multiple instances, however: each instance of the $i$-th application has its own integer memory demand of $0 < q_i \leq Q$ (measured in the same unit as the memory capacity). For example, if the $i$-th application is placed on two machines, it uses a total of $2q_i$ units of memory: $q_i$ units on the first and $q_i$ units on the second machine. We assume that, for the $i$-th application, $q_i$ is constant --- in particular, unrelated to the load assigned to the instance. This corresponds to memory requirements dominated by the software stack (libraries etc.), or the dataset, rather than dynamically changing with the processing load. Our model can be extended to memory requirement being a (perhaps linear) function of the load assigned, but in this paper we prefer to keep our model simple and the memory requirements constant in order to prove formal results and show qualitative conclusions.

One of our goals is to illustrate how much we can improve the utilization of the whole cluster by maintaining multiple instances. We will test our \emph{multi-instanced models} against the standard scheduling models, later called \emph{single-instanced}.

Moreover, we know $p_i$, the total load that needs to be processed by application $i$. The load $p_i > 0$ is expressed as the number vCPUs it requires (we use the same metric for the load and the capacity following Borg~\cite{Borg43438,Borg49065}). In the single-instanced model, $p_i$ units of the vCPU capacity on a single machine need to be reserved for the sole instance.
In the multi-instanced model, the load balancer freely divides $p_i$ between the application's instances as long as the reserved CPU capacity sums up to at least $p_i$. The amount of computation assigned to a particular instance may be fractional, too. For many serving applications, the load consists of a huge number of relatively tiny requests (single API calls or function invocations). If the total QPS is in thousands, load balancer decisions can be reasonably approximated by fractional assignments.

We assume a classic, off-line and clairvoyant model with $p_i$ and $q_i$ known in advance, a common approach in cloud resource management research. The load $p_i$ does not correspond to processing time, but to the total load of the $i$-th application. When customers deploy their applications in cloud, they are commonly required to upper-bound the total number of vCPUs and memory --- and their application is then allocated based on these given upper-bounds. Additionally, serving applications are usually long-running: steady-state vCPU and memory requirements can be precisely estimated based on relatively simple models using historical trends~\cite{rzadca2020autopilot}. 

Our aim is to assign applications to machines
in such a way that all the incoming requests can be processed and that memory used on each machine does not exceed its capacity. As---in this model---it makes no sense to place two instances of the same application on the same machine (if such two instances were merged, they would process the same load using half the memory), we will use a 0-1 variable $x_{ij}$ to determine whether the instance of the $i$-th application is placed on the $j$-th machine, or not. If positive, the $p_{ij}$ variable will determine the total vCPU capacity reserved for the $i$-th application on the $j$-th machine. Furthermore, $p_{ij}/p_i$ corresponds to the share of the whole traffic of the $i$-th application routed by a load balancer to the $j$-th machine. We thus always want the following constraints to be satisfied:

\begin{itemize}
    \item As we do not overcommit memory on any machine, the memory utilization of all the instances placed on the $j$-th machine does not exceed machine's capacity $Q$, i.e.
        \begin{equation}
            \sum_i x_{ij}q_i \leq Q, \quad\text{for each $j$};
            \label{eq:memory-constraint}
        \end{equation}
    \item The total vCPU capacity reserved for the $i$-th application is greater or equal to the application's load $p_i$, i.e.
        \begin{equation}
            \sum_j x_{ij}p_{ij} \geq p_i, \quad\text{for each $i$};
            \label{eq:cpu-lower-constraint}
        \end{equation}
    \item Assignments $x_{ij}$ are binary, i.e.
        \begin{equation}
            x_{ij} \in \{0, 1\}, \quad\text{for each pair of $i$ and $j$}.
        \end{equation}
\end{itemize}

In Fig.~\ref{fig:exas}, we present an example allocation of three applications, $\{\ja, \jb, \jc \}$, to one machine. Total memory used by these three instances is equal to $q_{\ja} + q_{\jb} + q_{\jc} \leq Q$, so the constraint \eqref{eq:memory-constraint} is not violated on this machine. At the same time, the total CPU used slightly exceeds the value of $P$, i.e. $p_{\ja j} + p_{\jb j} + p_{\jc j} > P$ (so far we have not introduced machine-level constraints on the vCPU load). In fact, applications $\ja$ and $\jc$ are assigned as much vCPU capacity as they require, i.e. $p_{\ja j} = p_\ja$ and $p_{\jc j} = p_\jc$. However, the vCPU capacity assigned to application $\jb$ is strictly less than its whole demand, i.e. $p_{\jb j} < p_\jb$. Thus, another instance of application $\jb$ has to be allocated on at least one other machine, as otherwise constraint \eqref{eq:cpu-lower-constraint} will not be satisfied.

\begin{figure}
    \centering
    \includegraphics[width=\columnwidth]{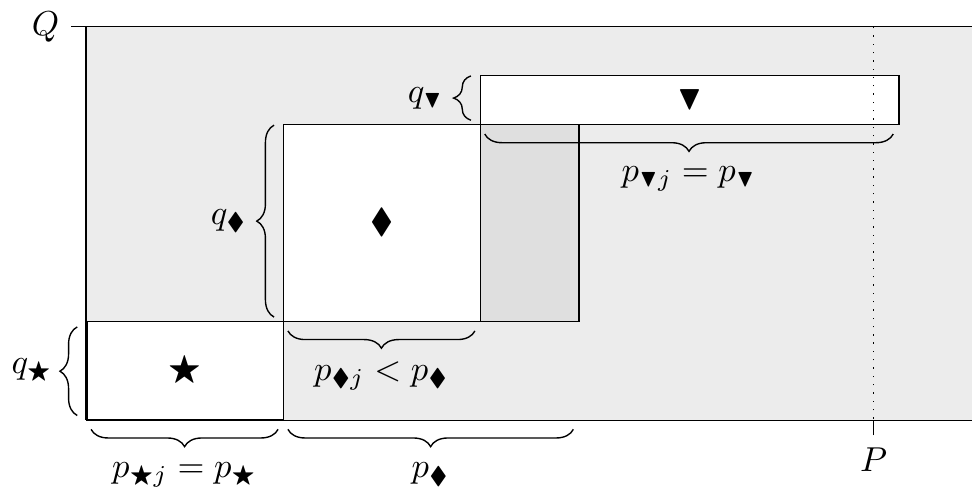}
    \caption{An example assignment of three applications to a single machine $j$ in the multi-instanced model. Memory capacity $Q$ (solid horizontal line) cannot be exceeded due to constraint~\eqref{eq:memory-constraint}, while vCPU capacity $P$ (dotted vertical line) can.}
    \label{fig:exas}
\end{figure}

We consider two natural optimization objectives.
First, given a fixed pool of homogenous machines we assign applications and workloads to machines in such a way that the maximum vCPU usage among all the machines is minimized:
\begin{equation}
    \min \max_j \sum_i x_{ij}p_{ij} \text{.}
\end{equation}
This objective models a cloud resource manager such as a single BorgMaster~(\cite{Borg43438,Borg49065}) that allocates load on a single cluster of physical machines.

Second, given a set of applications and a constraint on the maximum vCPU usage on each machine, we minimize the number of used machines.
This models a large customer minimizing the number of rented VMs while maintaining applications' SLOs. As formalization of this objective needs additional notation, we defer it to Sec.~\ref{sec:min-cpu-number}.

\section{Optimal Balancing: Minimizing the Maximum CPU usage}
\label{sec:min-cpu-usage}
In the optimal balancing problem, $\min \max_j \sum_i x_{ij}p_{ij}$, we start by analyzing polynomially-solvable special cases of common CPU and memory requirements; we then proceed to prove NP-hardness for arbitrary CPU requirements (with unit memory requirements); and arbitrary memory requirements (with unit CPU requirements).

\subsection{Common CPU and memory requirements}
\label{subsec:minmaxcpu-qp}
We now assume that all applications have the same CPU requirement of $p$ and the same memory requirement of $q$.
Although it would seem that it makes no sense to have multiple instances of any application, especially if additionally $p = 1$, it is not true, as the following example shows.

\begin{example}
Let us consider a two-machine environment ($m = 2$) where $Q = 2$, and three different applications such that $p_i = q_i = 1$ for $i \in \{\ja, \jb, \jc\}$. If one assigns applications $\ja$ and $\jb$ to the first machine and application $\jc$ to the second machine, then $\max_j \sum_i x_{ij}p_{ij} = 2$ (the maximum is reached on the first machine, see Fig.~\ref{fig:ex-assignment}(a)). However, if one assigns application $\ja$ to the first machine, application $\jc$ to the second machine, and application $\jb$ to \emph{both} the first and the second machine in such a way that $p_{\ja 1} = p_{\ja 2} = 0.5$, then $\max_j \sum_i x_{ij}p_{ij} = 1.5$ (see Fig.~\ref{fig:ex-assignment}(b)) which is the lower bound on the maximum CPU usage (as $\sum_i p_i/m = 1.5$).

\begin{figure}
    \begin{subfigure}[b]{0.5\textwidth}
        \centering
        \includegraphics[scale=0.85]{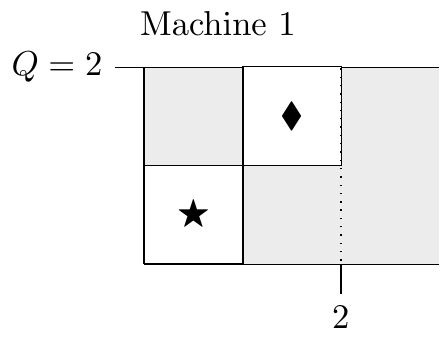}
        \includegraphics[scale=0.85]{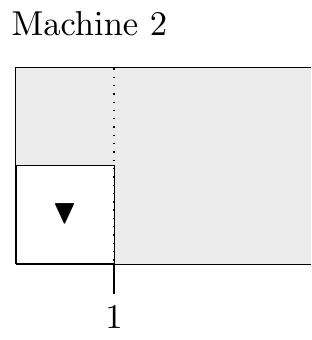}
        \caption{All the applications have a single instance}
    \end{subfigure}
    \hspace{1em}
    \begin{subfigure}[b]{0.5\textwidth}
        \centering
        \vspace*{0.5em}
        \includegraphics[scale=0.85]{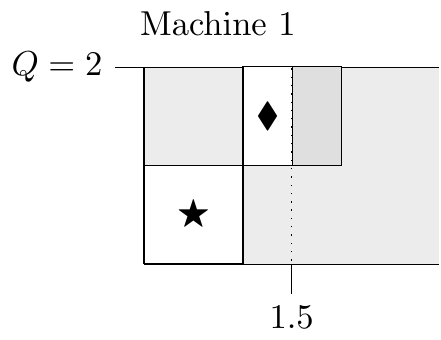}
        \includegraphics[scale=0.85]{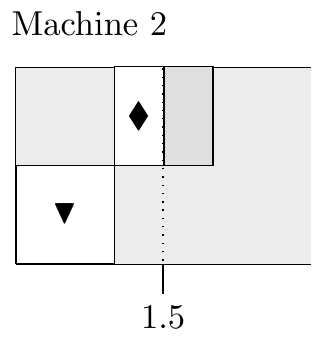}
        \caption{Application $2$ has two instances}
    \end{subfigure}
    \caption{Example assignment of three applications to two machines.}
    \label{fig:ex-assignment}
\end{figure}
\end{example} 

Alg.~\ref{alg:min-maxp-pq} shows how to solve this problem for any $q$ and $Q$. 
To simplify its description, we assume now that $q=1$ (it can be done without loss of generality).
Notice that $m\cdot Q \geq n\cdot q = n$ must hold (at least one instance of each application must be placed on some machine).
Otherwise, there would exist no feasible assignment of applications to machines. As $m$, $n$ and $Q$ are integers, it must also hold that $Q \geq \lceil n/m \rceil$. 
Be reminded that in this special case we assume that each application requires a total of $p$ CPU units. 
If $n \bmod m = 0$, then the optimal solution can be obtained by assigning any $n/m$ applications to each of the machines, without adding multiple instances. This solution leads to the optimal value of $\max_j \sum_i x_{ij}p_{ij} = pn/m.$ Now, assume that $n \bmod m > 0$ (which entails $\lceil n/m \rceil > n/m > \lfloor n/m \rfloor$). If $Q > \lceil n/m \rceil$, then at least $\lfloor n/m \rfloor + 2$ instances can be placed on each machine. If so, the optimal solution with $\max_j \sum_i x_{ij}p_{ij} = pn/m$ can be obtained with the McNaughton's algorithm~\cite{McNaughton1959}. However, if $Q = \lceil n/m \rceil$, the things get a little complicated, as a single machine can be assigned at most $\lfloor n/m \rfloor + 1$ instances. The actual maximum CPU usage strongly depends on the value of $n \bmod m$, which we assumed is strictly greater than $0$. For example, if $n \bmod m > m/2$, then there exists an optimal schedule in which at least one machine will be assigned $\lceil n/m \rceil$ instances, none of which will be duplicated on any other machine. Thus, the optimal value of $\max_j \sum_i x_{ij}p_{ij}$ must be equal to $p\cdot\lceil n/m \rceil = p\cdot(\lfloor n/m \rfloor + 1)$. On the other hand, if $n \bmod m = 1$, then this single additional instance can be placed on each machine and the load can be evenly divided between all its instances leading to $\max_j \sum_i x_{ij}p_{ij} = p\cdot(\lfloor n/m \rfloor + 1/m)$. In general, if $Q = \lceil n/m \rceil$ and $n \bmod m > 0$, then the optimal value of $p\cdot(\lfloor n/m \rfloor + r)$ can be reached where $r = 1 / \lfloor m / (n \bmod m) \rfloor$.

The latter results require some justification. Let us observe that in the considered case ($Q = \lceil n/m \rceil$ and $n \bmod m > 0$) there always exists an optimal assignment in which each machine hosts at least $\lfloor n/m \rfloor$ single-instances applications and \emph{at most} one of the remaining $n \bmod m$. Thus, let $n' = n \bmod m$, which implies $0 < n' < m$. The exact value of $r$ can be derived by answering the following question: \emph{what is the optimal CPU usage for $m$ machines and $n'$ applications, if each machine can host only one application?} If $n'$ divides $m$, then one can split each application into exactly $m / n'$ identical instances, one per machine. Optimally, every instance would then process exactly $1 / (m / n')$ of the load $p$. In general, it may be not possible to split each application into exactly $m / n'$ identical instances. As we want to maximize the minimum number of instances, in the optimal assignment some applications will have $\lfloor m / n'\rfloor$ instances, while some (zero, if $n'$ divides $m$) will have $\lfloor m / n'\rfloor + 1$ instances. Thus, the minimal achievable fraction of load processed by a single instance will be $1 / \lfloor m / (n \bmod m) \rfloor$ of $p$, as the fewer instances, the more load each of them needs to process.

\begin{algorithm}
    \caption{Constructing the assignment that guarantees the minimal maximum vCPU usage if $p_i = p$ and $q_i = q$. For a fixed value of $m$, this algorithm works in $O(n)$.}
    \label{alg:min-maxp-pq}
    \begin{algorithmic}
        \State{$Q \gets q\cdot(Q \div q)$} \Comment{Make $Q$ a multiple of $q$}
        \State{$\mathcal{A} \gets$ a stack of all the applications}

        \For{$j \gets 1, 2, \dots, m$}
            \For{$k \gets 1, 2, \dots, \lfloor n/m\rfloor$}
                \State{$J \gets \textsc{Pop}(\mathcal{A})$}
                \State{Assign $p$ units of vCPU load of application $J$}
                \State{\quad to the $j$-th machine}
            \EndFor
        \EndFor

        \State{$r \gets 0$}
        \If{$|\mathcal{A}| > 0$} \Comment{There are still applications left}
            \If{$Q > q\cdot\lceil n/m \rceil$} \Comment{Use McNaughton's algorithm}
                \State{$r \gets p\cdot(n/m - \lfloor n/m\rfloor)$} \Comment{Available machine load}

                \State{$i \gets 1$} \Comment{Index of the considered machine}
                \State{$rr \gets r$} \Comment{Disposable load on the machine}
                \While{$i < m$}
                    \State{$J \gets \textsc{Pop}(\mathcal{A})$}
                    \State{Assign $rr$ units of CPU load of}
                    \State{\quad application $J$ to the $i$-th machine}
                    \State{$i \gets i + 1$} \Comment{Move to the next machine}
                    \State{Assign $p - rr$ units of CPU load of}
                    \State{\quad application $J$ to the $i$-th machine}
                    \State{$rr \gets r - (p - rr)$}
                \EndWhile
            \Else \Comment{Assign applications unevenly}
                \State{$r \gets p / \lfloor m / (n \bmod m) \rfloor$} \Comment{Available machine load}
                
                \State{$i \gets 1$} \Comment{Index of the considered machine}
                \State{$rp \gets 0$} \Comment{Disposable load of an application}
                \While{$i \leq m$}
                    \If{$rp = 0$}
                        \If{$\textsc{Empty}(\mathcal{A})$}
                            \State{\textbf{end while}}
                        \EndIf
                        \State{$J \gets \textsc{Pop}(\mathcal{A})$}
                        \State{$rp \gets p$}
                    \EndIf
                    \State{Assign $\min\{rp, r\}$ units of CPU load of}
                    \State{\quad application $J$ to the $i$-th machine}
                    \State{$rp \gets r - \min\{rp, r\}$} \Comment{Load left for app. $J$}
                    \State{$i \gets i + 1$}
                \EndWhile
            \EndIf
        \EndIf
    \end{algorithmic}
\end{algorithm}

\subsection{Arbitrary CPU or memory requirements}
\label{subsec:minmaxcpu-arbitrary}

If either CPU or memory requirements are arbitrary (i.e. application-dependent), then the problem becomes NP-Hard.

\begin{lemma}
Minimizing the maximum CPU usage is NP-Hard: 
\begin{enumerate}
\item for unit CPU requirements ($\forall i: p_i=1$) and arbitrary memory requirements; 
\item for unit memory requirements ($\forall i: q_i=1$) and arbitrary CPU requirements.
\end{enumerate}
\end{lemma}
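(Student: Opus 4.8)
The plan is to prove both items by reduction from 3-PARTITION, which is strongly NP-hard: given $3k$ positive integers $a_1,\dots,a_{3k}$ with $\sum_j a_j = kB$, decide whether they can be split into $k$ triples each summing to exactly $B$. In each case I would build an instance on $m=k$ machines whose resource budgets are so tight that no application can be split --- every application must use exactly one instance --- and whose optimal objective equals a trivial lower bound if and only if the 3-PARTITION instance is a yes-instance. The reductions are clearly polynomial.

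For item (1) (unit CPU, arbitrary memory) I would create one application per element, with memory demand $q_j=a_j$ and load $p_j=1$, and set the memory capacity to $Q=B$. Since the total demand $\sum_j q_j = kB$ equals the total capacity $mQ=kB$, splitting even one application would push the total memory strictly above capacity; hence in every feasible assignment each application occupies exactly one machine, and the load of a machine equals the number of applications placed on it. As the total load $3k$ is spread over $k$ machines, $\max_j\sum_i x_{ij}p_{ij}\ge 3$ always. I would then show this bound is met precisely when each machine carries exactly three applications whose memory demands sum to at most $B$; since those per-machine sums are each $\le B$ yet total $kB$, they must all equal $B$, which is exactly a valid 3-partition. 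Thus the optimum equals $3$ iff the 3-PARTITION instance is a yes-instance.

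For item (2) (unit memory, arbitrary CPU) I would instead set $q_j=1$, $p_j=a_j$, and $Q=3$. Now the memory budget caps each machine at three instances, giving $mQ=3k$ instance slots for $3k$ applications that each need at least one instance; again no splitting is possible and every machine holds exactly three applications. The total CPU load $kB$ over $k$ machines gives $\max_j\sum_i x_{ij}p_{ij}\ge B$, and this lower bound is attained iff the three loads on each machine sum to at most $B$, i.e.\ (since the total is $kB$) to exactly $B$ on every machine --- once more a 3-partition. So the optimum equals $B$ iff the instance is a yes-instance.

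The main obstacle, and the step I would be most careful with, is justifying that splitting applications can never help beat the stated lower bound: everything hinges on making the tight resource (memory in item (1), memory-as-instance-slots in item (2)) leave zero slack, so that any feasible --- equivalently, any optimal --- assignment already encodes the combinatorial partition. A secondary point to verify is that attaining the lower bound forces a perfectly balanced, whole-application assignment of exactly three applications per machine, which is what ties the objective value back to the exact-sum requirement of 3-PARTITION; I would check both directions of each ``iff'' explicitly, constructing the balanced assignment from a given 3-partition and conversely reading a 3-partition off any assignment meeting the bound.
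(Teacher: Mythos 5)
Your proposal is correct and follows essentially the same route as the paper: the identical reduction from \textsc{3-Partition} with the same parameter choices ($p_i=1$, $q_i=a_i$, $Q=B$ with threshold $3$ in one case; $q_i=1$, $p_i=a_i$, $Q=3$ with threshold $B$ in the other). In fact, you make explicit the key step the paper leaves implicit --- that the zero-slack memory budget rules out splitting, so any assignment meeting the bound encodes a partition into triples of sum exactly $B$.
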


\begin{proof}
We show that \textsc{3-Partition} (where we need to split a set of $3m$ values into $m$ threes in such a way that the sum of each three is the same) reduces to these special cases. Let us be given a set of $3m$ positive integers $a_1, a_2, \dots, a_{3m}$ (with their sum being a multiple of $m$). We create an instance of the balancing problem 
with $n=3m$ applications such that, depending on the case:
\begin{enumerate}
    \item $p_i = 1$ and $q_i = a_i$,
    \item $q_i = 1$ and $p_i = a_i$,
\end{enumerate}
for each $i \in \{1, 2, \dots, 3m\}$. We are also given $m$ machines such that, depending on the case:
\begin{enumerate}
    \item $Q = \frac{1}{m} \sum_{i=1}^{3m} q_i$,
    \item $Q = 3$.
\end{enumerate}
The question is, depending on the case, \emph{does there exist an assignment of instances to machines such that}:
\begin{enumerate}
    \item $\max_j \sum_i x_{ij}p_{ij} \leq 3$?,
    \item $\max_j \sum_i x_{ij}p_{ij} \leq \frac{1}{m} \sum_{i=1}^{3m} p_i$?.
\end{enumerate}
Note that any \textsc{3-Partition} instance is a yes-instance if and only if the corresponding instance of our problem is a yes-instance. As the transformation can be performed in polynomial time, the NP-Hardness follows.
\end{proof}

\section{Minimizing the number of machines used}
\label{sec:min-cpu-number}
In the bin-packing variant, 
we minimize the number of used machines
with a constraint $P$ on the maximum vCPU utilization on any machine.
In the ILP formulation, we introduce an indicator binary variable $y_j \in \{0, 1\}$
that marks a machine that is used (that is assigned some load):
\begin{equation}
    \sum_i x_{ij}p_{ij} \leq P y_j \text{.}
\end{equation}
The number of machines (hence, the number of $y_j$ variables) is upper bounded by $\sum_{i=1}^n \lceil p_i/P\rceil$, corresponding to an allocation in which each machine is assigned at most one instance of some application, with the vCPU capacity of $P$, until all the CPU requirements are met.
The goal is thus: 
\begin{equation*}
    \min \sum_j y_j \text{.}
\end{equation*}
We analyze special cases of: (1) unit; (2) common; and (3) arbitrary CPU or memory requirements.

\subsection{Unit CPU and memory requirements}
This special case is simple. Consider a machine for which the values of $P$ and $Q$ are given. Note that both the $P$ and $Q$ values are assumed to be integers. 
In a single-instanced model, 
exactly $\min\{P, Q\}$ applications are allocated on each machine $j$. 
Surprisingly, having multiple instances does not decrease the number of used machines, as the following lemma shows.

\begin{lemma}
\label{lemma:no-of-machines-1-1-no-load}
If $p_i = q_i = 1$ for each $i$, then there exists an optimal assignment of instances to machines where all applications have exactly one instance.
\end{lemma}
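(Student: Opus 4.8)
The plan is to prove a lower bound on the number of machines that holds for \emph{every} feasible assignment (multi-instanced or not), and then to exhibit a single-instanced assignment that meets this bound exactly. Since such a single-instanced assignment uses the smallest possible number of machines, it is optimal, which is exactly what the lemma asserts.

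First I would derive the lower bound through two independent counting arguments. Fix any feasible assignment using $m'$ machines (i.e.\ $\sum_j y_j = m'$). On the memory side, every application needs at least one instance, since its load $p_i = 1 > 0$ cannot be served otherwise; hence the total number of instances $\sum_{i,j} x_{ij}$ is at least $n$. Because $q_i = 1$, constraint~\eqref{eq:memory-constraint} limits each used machine to at most $Q$ instances, so this total is also at most $m' Q$, giving $n \le m' Q$. On the CPU side, summing the lower bounds~\eqref{eq:cpu-lower-constraint} over all applications yields $\sum_{i,j} x_{ij} p_{ij} \ge \sum_i p_i = n$, while the per-machine cap $\sum_i x_{ij} p_{ij} \le P$ on each of the $m'$ used machines gives $\sum_{i,j} x_{ij} p_{ij} \le m' P$, hence $n \le m' P$. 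Combining, $m' \ge \lceil n/P \rceil$ and $m' \ge \lceil n/Q \rceil$, so $m' \ge \lceil n / \min\{P, Q\} \rceil$.

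Next I would exhibit a single-instanced assignment attaining this bound. Placing exactly $\min\{P, Q\}$ applications (each as a single instance carrying its full unit of load) on each machine uses $\lceil n / \min\{P, Q\} \rceil$ machines and is feasible: each machine hosts at most $\min\{P,Q\} \le Q$ instances, so~\eqref{eq:memory-constraint} holds, and its total load is at most $\min\{P,Q\} \le P$, so the CPU cap holds. Since this solution matches the lower bound, it is optimal, and it assigns exactly one instance to every application, establishing the claim.

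The argument is essentially a pair of pigeonhole-style counting bounds, so I do not anticipate a serious obstacle. The one point requiring care is the memory count: one must explicitly argue that \emph{every} application contributes at least one instance, so that the instance total is bounded below by $n$ rather than merely by the number of distinct applications placed. The remaining steps are routine, including the identity $\max\{\lceil n/P\rceil, \lceil n/Q\rceil\} = \lceil n/\min\{P,Q\}\rceil$.
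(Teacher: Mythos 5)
Your proof is correct, but it follows a genuinely different route from the paper's. The paper proves the lemma by an exchange argument: starting from an arbitrary optimal (possibly multi-instanced) assignment, it collects the set $\mathcal{I}$ of fractionally-loaded applications, observes that these consume CPU $|\mathcal{I}|$ but memory at least $2|\mathcal{I}|$, removes them from all machines, and re-inserts them as whole single instances without increasing the number of machines used. You instead sandwich the optimum: the two counting bounds $n \le m'Q$ (memory) and $n \le m'P$ (CPU) show that every feasible assignment uses at least $\lceil n/\min\{P,Q\}\rceil$ machines, and the greedy single-instanced packing of $\min\{P,Q\}$ whole applications per machine attains exactly this bound. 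Your approach buys more than the paper's: it yields a closed-form optimal value and thereby directly proves the correctness of the greedy algorithm that the paper states immediately after the lemma, whereas the paper's reinsertion step is left somewhat informal (``If this is so for all the machines, the lemma follows''). What the exchange argument buys in return is that it transforms \emph{any} optimal solution into a single-instanced one by local moves, without first having to identify the optimal value; both arguments extend equally well to the case $p \mid P$, $q \mid Q$ by rescaling.

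One small point to tighten. As literally written, your memory count $\sum_{i,j} x_{ij} \le m'Q$ can fail in degenerate ILP solutions that place zero-load instances ($x_{ij}=1$, $p_{ij}=0$) on machines with $y_j=0$; the constraint $\sum_i x_{ij}p_{ij} \le P y_j$ permits this, and such phantom instances are not counted by $m'$. The fix is immediate and in the spirit of the caveat you already flagged: count only instances carrying \emph{positive} load. Each application has at least one such instance, by constraint~\eqref{eq:cpu-lower-constraint}, and any machine hosting one is necessarily used, so the number of these instances is at least $n$ and at most $m'Q$, which is all your argument needs.
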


\begin{proof}
Let us consider any optimal assignment in which an application has more than one instance. For each machine $j,$ let us denote by $\mathcal{I}_j$ the set of all applications $i$ placed on that machine ($x_{ij} = 1$), but only with partial load ($p_{ij} < 1$). Thus, there must exist at least two machines, $j$ and $k$, for which $|\mathcal{I}_j| > 0$ and $|\mathcal{I}_k| > 0$. In consequence, the set $\mathcal{I} = \bigcup_j \mathcal{I}_j$ is not empty. Notice that the total CPU usage of the applications in $\mathcal{I}$ is $|\mathcal{I}|$, and that the total memory usage of these applications is $\sum_j |\mathcal{I}_j| \geq 2|\mathcal{I}|$. 

We now reassign applications to machines in such a way that the number of used machines does not increase, but all the applications have exactly one instance. Let all the applications from $\mathcal{I}$ be removed from all the machines. Now, consider any machine $j$ that was affected by this operation. If one assigns to this machine as many complete ($p_{ij} = 1$) applications $i$ from the $\mathcal{I}$ set as possible, the total CPU load on machine $j$ will be not lower than the initial load. If this is so for all the machines, the lemma follows. Notice that the statement might be not true only if the number of applications left to be assigned is lower than the capacity of the selected machine. However, in such a case the lemma also follows, as the limit of $P$ is not achieved. 
\end{proof}

Lemma~\ref{lemma:no-of-machines-1-1-no-load} leads us to the following greedy approach.
Determine the critical capacity $\min\{P, Q\}$ of a machine.
Then, until there are no applications left to be assigned, place
as many as possible complete ($p_{ij} = 1$) instances of applications on a new machine.

The above argument generalizes to any case in which $p\mid P$ and $q\mid Q$. Indeed, if $q\mid Q$ then the memory requirements can be scaled to $q = 1$ with $Q$ changed to $Q \div q$ (same argument holds for $p$).

\subsection{Common CPU and memory requirements}

As seen in the previous section, when $p\mid P$ and $q\mid Q$, there exists an optimal assignment of applications to machines such that no application uses more than one instance. 
If $q\nmid Q$, then without the loss of generality, we can reduce $Q$ to $q\cdot(Q \div q)$ (as no application would fit into the remaining capacity $Q \mod q$ anyhow). 
Notice that $q\mid q\cdot(Q \div q)$. In other words, we can always assume---without the loss of generality---that $q\mid Q$ and, equivalently, that $q = 1$. 

However, if $p\nmid P$, then Lemma~\ref{lemma:no-of-machines-1-1-no-load} is not true anymore, as the following example shows.

\begin{example}
Assume $Q = 2$ and $P = 3$, and that we are given three applications, $\{\ja, \jb, \jc\}$, with $q_i = 1$ and $p_i = 2$ (\mbox{$p\nmid P$}). 
In the single-instanced model, the optimal solution allocates each application to a separate machine, with $m=3$. 
A multi-instanced model allows us to split one of the applications as in Fig.~\ref{fig:ex-assignment-p-nmid-P}, and in consequence use just two machines.
\end{example}

\begin{figure}
    \centering
    \includegraphics[scale=0.85]{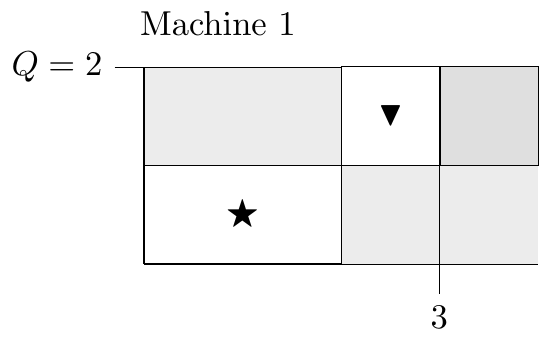}
    \includegraphics[scale=0.85]{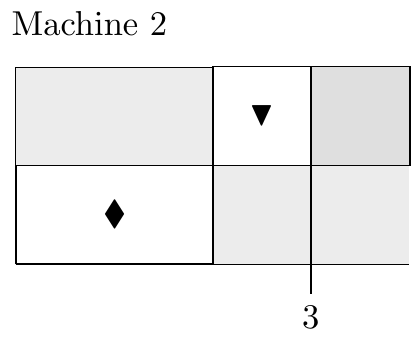}
    \caption{When $\jc$ has more instances, the allocation uses just two machines (while a single-instanced allocation uses three).}
    \label{fig:ex-assignment-p-nmid-P}
\end{figure}

Applications using multiple instances can reduce the number of used machines by the factor of almost two, as proved by the following lemma.

\begin{lemma}
\label{lemma:ratio2}
Let $m$ be the optimal number of machines for the single-instanced model; and let $m'$ be the optimal number of machines for the multi-instanced model. 
(1) For any instance, $m/m' < 2$; (2) for any given real $\varepsilon > 0$, there exists an instance such that $2(1 - \varepsilon) < m/m' < 2$.
\end{lemma}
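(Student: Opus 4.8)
The plan is to reduce to the normalized setting of this section ($q=1$, with $q\mid Q$), write both optima in closed form, and isolate the single genuinely tight regime from the trivial ones. Set $f=\lfloor P/p\rfloor$ and $c=\min\{Q,f\}$. Since all applications are identical, the single-instanced optimum is $m=\lceil n/c\rceil$ (at most $c$ applications fit on a machine, and some packing attains this). For the multi-instanced optimum I would record the two unconditional lower bounds that every feasible solution obeys: the memory bound $m'\ge\lceil n/Q\rceil$ (there are at least $n$ instances, at most $Q$ per machine) and the CPU bound $m'\ge\lceil np/P\rceil$ (total load $np$, capacity $P$ per machine).

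For part (1), I would first dispose of the easy regimes. If $Q\le f$ then $c=Q$ and $m=\lceil n/Q\rceil\le m'$, so $m/m'\le 1$. If $p\mid P$ then $f=P/p$ and the CPU bound gives $m'\ge\lceil np/P\rceil=\lceil n/f\rceil=m$, again $m/m'\le 1$. The only remaining case is $Q>f$ together with $p\nmid P$, where $m=\lceil n/f\rceil$. Here I would sharpen the CPU bound using the \emph{strict} inequality $P<(f+1)p$: from $np\le m'P<m'(f+1)p$ we get $m'(f+1)\ge n+1$, hence $m'\ge\lceil (n+1)/(f+1)\rceil$. It then suffices to prove the elementary inequality $\lceil n/f\rceil<2\lceil (n+1)/(f+1)\rceil$ for all $n,f\ge 1$; this follows by bounding $\lceil n/f\rceil\le (n+f-1)/f$ and checking $(n+f-1)/f<2(n+1)/(f+1)$, which cross-multiplies to $(f-1)n>f^2-2f-1$ (true for $n\ge f$), together with the immediate sub-case $n<f$ where $m=1$. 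Combining, $m<2m'$.

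For part (2), I would exhibit the family $q=1$, $Q=3$, $p=k$, $P=2k-1$ with $k\ge 2$ and a large number $n$ of applications. Then $f=\lfloor (2k-1)/k\rfloor=1$, so $c=1$ and the single-instanced optimum is $m=n$. For the multi-instanced optimum I would realize the CPU bound by a McNaughton-style wrap-around: concatenate the loads and cut the sequence into segments of length $P$, giving $\lceil nk/(2k-1)\rceil$ machines; since a length-$(2k-1)$ segment meets at most $\lceil (2k-1)/k\rceil+1=3$ applications, each machine carries at most $3=Q$ instances, so the packing is memory-feasible and $m'=\lceil nk/(2k-1)\rceil$. As $n\to\infty$ the ceiling becomes negligible and $m/m'\to (2k-1)/k=2-1/k$; choosing $k>1/(2\varepsilon)$ and then $n$ large enough yields $m/m'>2(1-\varepsilon)$, while part (1) supplies $m/m'<2$.

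The main obstacle is the strictness in part (1)'s boundary case: the naive bound $m'\ge\lceil np/P\rceil\ge\lceil n/(f+1)\rceil$ yields only $m/m'\le 2$ and actually fails to be strict (e.g.\ $f=2$, $n=3$), so the whole argument hinges on extracting the extra $+1$ to obtain $m'\ge\lceil (n+1)/(f+1)\rceil$ and on the exact ceiling inequality. A secondary point demanding care is verifying that the wrap-around construction in part (2) is genuinely memory-feasible (the $Q=3$ check), so that $m'$ equals its CPU lower bound rather than exceeding it.
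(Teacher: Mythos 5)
Your proof is correct, and for part (1) it takes a genuinely different route than the paper. The paper argues by contradiction via an exchange argument: assuming $m \ge 2m'$, it rearranges an optimal single-instanced solution so that $m-1$ machines carry the maximum load $L$ and the $m$-th carries $L' \in (0,L]$, then shows by averaging that some machine of the optimal multi-instanced solution carries load at least $L+L'$, whence $P \ge L+L'$ and $Q \div q \ge (L+L')/p$ --- so two single-instanced machines could have been merged, contradicting optimality of $m$. You instead compute everything in closed form: $m=\lceil n/c\rceil$ with $c=\min\{Q,\lfloor P/p\rfloor\}$, dispose of the regimes where multi-instancing provably gives no gain ($Q\le\lfloor P/p\rfloor$, or $p\mid P$), and in the single remaining regime exploit the strictness of $P<(f+1)p$ to sharpen the CPU bound to $m'\ge\lceil(n+1)/(f+1)\rceil$, closing with the ceiling inequality $\lceil n/f\rceil<2\lceil(n+1)/(f+1)\rceil$ (whose cross-multiplied form $(f-1)n>f^2-2f-1$ and case split $n<f$ versus $n\ge f$ both check out). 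Your remark that the naive bound $m'\ge\lceil n/(f+1)\rceil$ is not strict is exactly right; the extracted ``$+1$'' is the crux. What your version buys is explicitness: the paper's relocation step is asserted rather than justified, and your case analysis isolates precisely when the ratio can exceed $1$. What the paper's version buys is a structural explanation --- if $m\ge 2m'$ then two machines are mergeable --- which does not depend on writing down closed-form optima and thus hints at how the argument might extend beyond identical applications. For part (2) your construction ($q=1$, $Q=3$, $P=2p-1$, wrap-around filling so that each machine meets at most three applications and the CPU lower bound is attained) is essentially the paper's: the paper couples the load to the instance count ($p=n$, $P=2n-1$) whereas you keep $p=k$ fixed and let $n$ grow independently, a mild decoupling that changes nothing essential.
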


\begin{proof}
We first show by contradiction that for any instance $m/m' < 2$. Notice that $m \geq m'$ and that $m = 1$ if and only if $m' = 1$. Thus, assume that $m' > 1$, $m \geq 2m'$, and that the values of $Q$, $P$, $q$ and $p$ are arbitrary. Consider the single-instanced model (which requires $q \leq Q$ and $p \leq P$). Let $L$ be the maximum CPU load among all the machines. As applications have common requirements, in any optimal solution, applications can be relocated in such a way that on the first $m - 1$ machines the load is equal to $L$, and on the $m$-th machine the load $L'$ is positive, yet not greater than $L$.  
Now, consider the multi-instanced model. As $m \geq 2m'$ and $m' > 1$, there must exist at least one machine for which the total CPU load $H$ meets the condition $H \geq L + L'$. This is so, as the average load on $m'$ machines is equal to:
\begin{equation*}
\begin{split}
\frac{(m-1)L + L'}{m'} & \geq \frac{(m-1)L + L'}{m/2} = \\
& = \left(2 - \frac2m\right)L + \frac2mL' =\\
& = L + \left(1 - \frac2m\right)L + \frac2mL' \geq \\
& \geq L + \left(1 - \frac2m\right)L' + \frac2mL' = \\
& = L + L'.
\end{split}
\end{equation*}

However, by the above,
the number of applications assigned to this machine is at least $(L+L')/p$. Thus, $Q \div q \geq (L+L')/p$ and $P \geq L + L'$. Consequently, in the single-instanced model, machines $m - 1$ and $m$ could be merged --- which means $m$ is not optimal, leading to a contradiction.

Now, let us show that, for any $\varepsilon > 0$, there exists an instance such that $2(1 - \varepsilon) < m/m'$. Let $n > 2$ be the number of different applications, and let $q_i = 1$ for each $i \in \{1, 2, \dots, n\}$. Let also $Q = 3$, $p = n$ and $P = 2n - 1$. If we consider a single-instanced model, the optimal number of machines, $m$, is equal to $n$. On the other hand, the total load processed by $n$ applications is $n^2$ and each of the machines is capable of processing the load of size $2n - 1$. As a consequence,
the lower bound on the number of machines $m'$ for the multi-instanced model is $\lceil n^2/(2n-1)\rceil$. An optimal assignment using exactly $\lceil n^2/(2n-1)\rceil$ machines, where at most three different applications are assigned to each machine, can be found by assigning jobs to machines greedily, one machine after another. Thus,
\begin{equation*}
\begin{split}
\frac{m}{m'} & = \frac{n}{\lceil n^2/(2n-1)\rceil} \\
             & > \frac{n}{n^2/(2n-1) + 1} = \frac{2n^2 - n}{n^2 + 2n - 1} \xrightarrow[n \to \infty]{} 2_{-}.\qedhere
\end{split}
\end{equation*}
\end{proof}

In order to find an optimal assignment in polynomial time, we observe that:
\begin{enumerate}
\item on any machine, we can allocate $\min\{P\div p, Q \div q\}$ whole applications;
\item it is suboptimal to allocate many instances of applications to a single machine if a single instance of the application could be assigned instead (cf. Lemma.~\ref{lemma:no-of-machines-1-1-no-load});
\item if the optimal load ($\min \max_j \sum_i x_{ij}p_{ij}$) is known, we can compare it to $P$ to see whether additional machines are necessary.
\end{enumerate}
Thus, the general idea of the algorithm is as follows (Alg.~\ref{alg:min-no-machines-pq}). Given the lower $m_L$ and the upper $m_U$ bounds on the possible number $m$ of machines, we test---based on binary search procedure and Sect.~\ref{subsec:minmaxcpu-qp}---whether the optimal maximum CPU load exceeds $P$. Depending on the answer, we limit the range of $m$. 
Note that Alg.~\ref{alg:min-no-machines-pq} works for the case of $p\mid P$, too. Also note that, given the optimal number of machines $m$, one can assign applications to machines based on the Alg.~\ref{alg:min-maxp-pq}.

\begin{algorithm}
\caption{Finding the minimal number of machines if $p_i = p$ and $q_i = q$}
\label{alg:min-no-machines-pq}
\begin{algorithmic}
    \State {$m_L \gets \lceil\max\{n/(Q \div q), n\cdot p/P\}\rceil$} \Comment{Lower bound on the number of machines}
    \State {$m_U \gets n\cdot\lceil p/P \rceil$}

    \While{$m_U > m_L$}
        \State{$m \gets (m_L + m_U) \div 2$}
        \If{$n \bmod m = 0$ or $Q \div q > \lceil n/m \rceil$}
            \State{$P' \gets p\cdot n/m$} 
        \Else
            \State{$P' \gets p\cdot(\lfloor n/m \rfloor + 1/\lfloor m/(n \bmod m)\rfloor)$}
        \EndIf
        \If{$P' > P$}
            \State{$m_L \gets m + 1$}
        \Else
            \State{$m_U \gets m$}
        \EndIf
    \EndWhile

    \State{\Return $m_L$}
\end{algorithmic}
\end{algorithm}

\subsection{Arbitrary CPU or memory requirements}

Analogically to the load balancing problem (Section~\ref{subsec:minmaxcpu-arbitrary}), 
if either CPU or memory requirements are arbitrary (i.e. application-dependent), then the problem becomes NP-Hard. 

\begin{lemma}
\label{lemma:min-no-of-machines-NP}
Minimizing the number of machines used is NP-Hard:
\begin{enumerate}
    \item for unit CPU requirements ($\forall i: p_i=1$) and arbitrary memory requirements; 
    \item for unit memory requirements ($\forall i: q_i=1$) and arbitrary CPU requirements.
    \end{enumerate}
\end{lemma}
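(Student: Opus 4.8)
The plan is to mimic the NP-hardness argument already used for the balancing problem (Section~\ref{subsec:minmaxcpu-arbitrary}) and reduce from \textsc{3-Partition}. Recall that \textsc{3-Partition} asks whether a set of $3m$ positive integers $a_1, \dots, a_{3m}$, whose total sum is $mB$, can be split into $m$ triples each summing exactly to $B$ (and, in the strongly NP-hard version, where each $a_i$ satisfies $B/4 < a_i < B/2$, forcing every feasible group to have exactly three elements). The strategy is to construct, from such an instance, an instance of the machine-minimization problem whose optimal number of machines equals $m$ if and only if the original \textsc{3-Partition} instance is a yes-instance.

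For case (2), unit memory and arbitrary CPU, I would set $q_i = 1$ and $p_i = a_i$ for each $i$, choose the per-machine CPU bound $P = B$, and set the memory capacity to $Q = 3$ so that at most three whole instances fit on a machine (combined with the $B/4 < a_i < B/2$ bounds, this caps each machine at three applications). The total load is $mB$, so at least $\lceil mB / P \rceil = m$ machines are required; a packing into exactly $m$ machines exists precisely when the $a_i$ can be grouped into $m$ triples of sum $B$. The one subtlety specific to \emph{this} problem, as opposed to plain bin packing, is the multi-instanced flexibility: I must argue that splitting an application across machines cannot help beat $m$ machines. This follows from the load lower bound $m = \lceil (\sum_i p_i)/P \rceil$, which holds regardless of how load is divided, so $m$ machines is optimal whenever achievable, and achievability is exactly the \textsc{3-Partition} condition. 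For case (1), unit CPU and arbitrary memory, I would dually set $p_i = 1$, $q_i = a_i$, take $Q = B$ as the memory capacity and $P = 3$ as the CPU bound; now memory is the binding hard constraint (memory cannot be split, since each instance carries its full $q_i$), the total memory is $mB$, and again $m$ machines suffice iff the weights partition into triples of sum $B$.

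The main obstacle I anticipate is handling the interaction between the CPU-splitting flexibility and the hardness argument, since the multi-instanced model is strictly more permissive than classical bin packing and one must rule out the possibility that fractional load assignment circumvents the combinatorial obstruction. In case (1) this is clean because memory demand $q_i$ is incurred in full on every machine hosting an instance, so duplicating instances only wastes memory and never relaxes the $Q$ constraint; the argument is essentially that of Lemma~\ref{lemma:no-of-machines-1-1-no-load}. In case (2) the flexibility is real, so I would lean on the volume bound $\lceil (\sum_i p_i)/P\rceil = m$ together with the memory cap $Q = 3$ to show that no assignment can use fewer than $m$ machines and that any $m$-machine assignment induces, after rounding, a valid triple partition. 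Once these two reductions are verified, both run in polynomial time in the input size, so the NP-hardness of both cases follows, completing the proof.
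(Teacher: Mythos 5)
Your reduction is exactly the paper's: the same mapping from \textsc{3-Partition} (case (1): $p_i=1$, $q_i=a_i$, $Q=B$, $P=3$; case (2): $q_i=1$, $p_i=a_i$, $P=B$, $Q=3$) with the question of whether $m$ machines suffice. You in fact go further than the paper's proof, which merely asserts the yes-iff-yes equivalence, by explicitly ruling out that multi-instance splitting could circumvent the combinatorial obstruction --- via the memory-volume argument in case (1), and in case (2) via the bounds $B/4 < a_i < B/2$ together with $Q=3$, which force every machine in an $m$-machine solution to carry exactly three instances, hence (by counting all $3m$ instances) no application is ever split.
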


\begin{proof}
We show that \textsc{3-Partition} reduces to these special cases. Let us be given a set of $3m$ positive integers $a_1, a_2, \dots, a_{3m}$ (with their sum being a multiple of $m$). We create an instance of our problem with  $3m$ applications such that, depending on the case:
\begin{enumerate}
    \item $p_i = 1$ and $q_i = a_i$,
    \item $q_i = 1$ and $p_i = a_i$,
\end{enumerate}
for each $i \in \{1, 2, \dots, 3m\}$. We are also given $m$ machines such that, depending on the case:
\begin{enumerate}
    \item $Q = \frac{1}{m} \sum_{i=1}^{3m} q_i$ and $P = 3$,
    \item $P = \frac{1}{m} \sum_{i=1}^{3m} p_i$ and $Q = 3$.
\end{enumerate}
The question is: \emph{does there exist an assignments of instances to machines such that $\sum_j y_j \leq m$?} Note that any \textsc{3-Partition} instance is a yes-instance if and only if the corresponding instance of our problem is a yes-instance. As the transformation can be performed in polynomial time, the NP-Hardness follows.
\end{proof}

\section{Heuristics}
\label{sec:heur}

As with arbitrary requirements both the balancing and the bin-packing problems are NP-hard, in this section we propose a number of heuristics. We focus on bin-packing (minimizing the number of used machines), partly due to space constraints, partly because this problem is perhaps more applicable of the two (as the machine's CPU capacity should not be exceeded in the steady state).
We start with baselines: heuristics packing single-instanced applications. Then, we extend them for the multi-instanced model.

A cloud cluster usually consists of large groups of similar machines. For each of these machines, the CPU capacity and memory size is known in advance. Moreover, usage limits can be set on these resources. For example, on a single machine, one may not want to exceed 95\% of memory capacity and 80\% of the total CPU capacity. These thresholds directly translate to values of $P$ and $Q$.

Before we introduce our heuristics, we discuss the limits on the values of $p_i$ and $q_i$. Let us consider an $i$-th application. It must hold that $q_i \leq Q$ as otherwise an instance of this application could not be placed on any machine. However, our baselines for single-instanced model force us to assume that for each $i$-th application $p_i \leq P$. In general, one could assume that if there exists an application for which $p_i > P$, then this application could be assigned --- in advance --- to $p_i \div P$ dedicated machines. Then, the remaining $p_i \bmod P$ units of load could be assigned to some other machine based on the considered heuristic. This approach may not lead, in general, to optimal solutions, as Example~\ref{example:piload} shows. Thus, we cannot introduce it in our baselines.

\begin{example}
\label{example:piload}
Let us consider a multi-machine environment where $P = 5$, $Q = 3$, and three different applications, $\{\ja, \jb, \jc\}$, such that $p_\ja = p_\jb = q_\ja = q_\jb = 2$, $p_\jc = 6$ and $q_\jc = 1$. If $P$ units of application's $\jc$ load will be assigned to a dedicated machine, then the total number of machines required to process all the load will be equal to $3$ (Fig.~\ref{fig:ex-assignment-piload}(a)) while the optimal number is equal to $2$ (Fig.~\ref{fig:ex-assignment-piload}(b)).

\begin{figure*}[t]
    \begin{subfigure}[b]{1.2\columnwidth}
        \centering
        \includegraphics[scale=0.55]{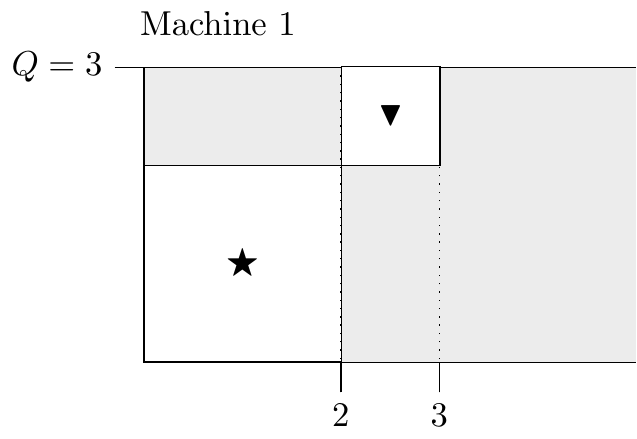}
        \includegraphics[scale=0.55]{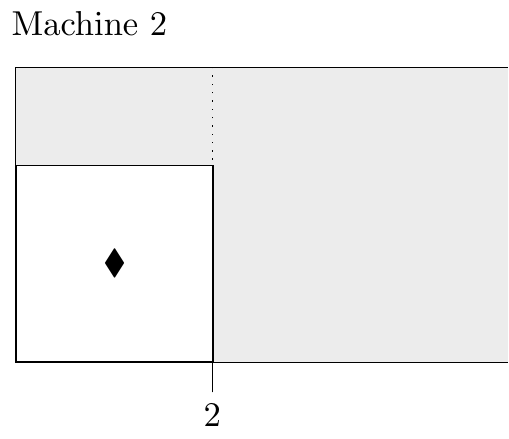}
        \includegraphics[scale=0.55]{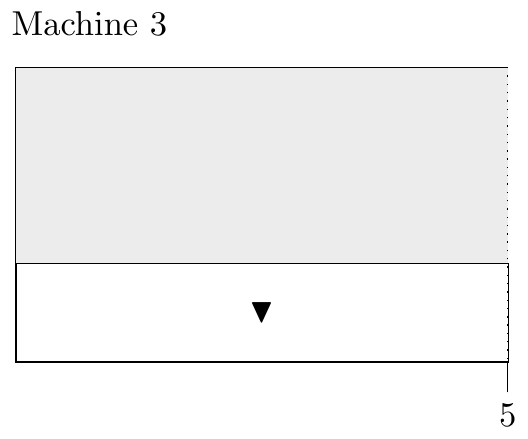}
        \caption{The $\jc$ application is forced to use a dedicated Machine 3}
    \end{subfigure}
    \begin{subfigure}[b]{0.9\columnwidth}
        \centering
        \includegraphics[scale=0.55]{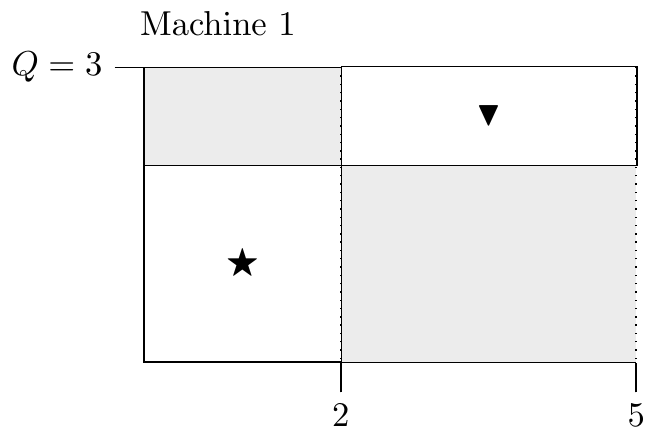}
        \includegraphics[scale=0.55]{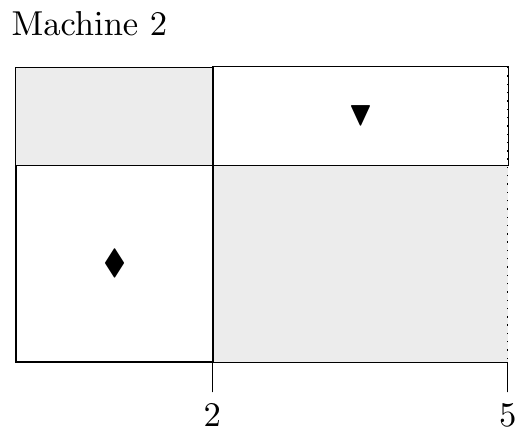}
        \caption{Optimal assignment}
    \end{subfigure}
    \caption{Example assignment of three applications to two machines. On the left, we assume that an application for which $p_i > P$ is assigned a dedicated machine. On the right, we present an optimal assignment.}
    \label{fig:ex-assignment-piload}
\end{figure*}
\end{example}

\subsection{Baselines for the single-instanced model}

As the baselines, we use standard list heuristics parametrized on two levels: (1) for a specific application, how to choose the machine; and (2) the order in which the applications are processed. We consider the following rules:
\begin{itemize}
    \item \textsc{First-Fit}. Find the first machine on which the application $i$ fits (the machine's free CPU is at least $p_i$; and the machine's free memory is at least $q_i$).
    \item \textsc{Next-Fit}. Find the next machine on which the application $i$ fits (in a cyclic way: after reaching the last opened machine, we start with the first one).
    \item \textsc{Worst-Fit}. Find the machine for which, after the application $i$ is placed, the remaining free memory is largest.
\end{itemize}
Each of these rules opens a new machine if the application does not fit on any machine. Also note that we use worst-fit, rather than best-fit, in order to leave as much free memory on a machine as possible for the coming applications.

We consider the following application orders:
\begin{itemize}
    \item \textsc{Mem-Increasing}/\textsc{Mem-Decreasing}. Choose the application with the largest/lowest memory requirement.
    \item \textsc{CPU-Increasing}/\textsc{CPU-Decreasing}. Choose the application with the largest/lowest CPU requirement.
    \item \textsc{Ratio-Increasing}/\textsc{Ratio-Decreasing}. Choose the application with the largest/lowest value of $p_i/q_i$, i.e., the ratio of CPU and memory requirements.
    \item \textsc{Random}. Choose applications in a random order.
\end{itemize}

\subsection{Heuristics for the multi-instanced model}

Our heuristics are based on the binary search (Algorithm~\ref{alg:min-no-machines-pq}). 
We explore the range of possible numbers of machines $[m_L, m_U]$. For a tested $m$, we apply one of the heuristics below and check whether the assignment does not not exceed $P$ on any used machine. We consider the following heuristics:

\begin{itemize}
    \item \textsc{CPU-Oriented}. This heuristic is based on the following assumption: \emph{we want to add instances to applications that have the highest CPU requirements}. The whole applications are placed on machines based on the \textsc{First-Fit}/\textsc{Next-Fit}/\textsc{Worst-Fit} rule, starting from the applications with the lowest CPU requirements. It may happen that at some point none of the remaining applications will fit any of the machines without adding instances. Then, one application after another, starting from the one with the largest memory requirements, we greedily fill the machines with as large instances as possible. We do it based on the \textsc{First-Fit}/\textsc{Next-Fit}/\textsc{Worst-Fit} rule, and taking into account the limits of $P$ and $Q$.
    \item \textsc{Mem-Oriented}. This heuristic is based on the following assumption: \emph{we want to add instances to applications that have the lowest memory requirements}. The whole applications are greedily placed on machines based on the \textsc{First-Fit}/\textsc{Next-Fit}/\textsc{Worst-Fit} rule, starting from the applications with the largest memory requirements. It may happen that at some point none of the remaining applications will fit any of the machines without adding instances. Then, one application after another, we greedily fill the machines with as large instances as possible. We do it based on the same rule, and taking into account the limits of $P$ and $Q$.
\end{itemize}

\section{Experiments}
\label{sec:experiments}

Although the theoretical results of Lemma~\ref{lemma:ratio2} are promising, they do not take into account the peculiarities of load processed in clouds. For this reason, we evaluate our heuristics using instances generated from the Azure Public Dataset V2 (VM Trace).

\subsection{Data preprocessing}

The Azure Public Dataset V2 (VM Trace) provides information about Azure VM workload collected over 30 consecutive days in 2019.
The requirements of each VM are bucketed by their memory usage (0-2 GB, 2-4 GB, 4-8 GB, 8-32 GB, 32-64 GB or more than 64 GB) and core count (0-2, 2-4, 4-8, 8-12, 12-24, and more than 24). 
The VMs in the trace are grouped into \emph{deployments} --- sets of virtual machines deployed and managed together by a single client.
We map each deployment to a separate application. 
We also filter out 513 out of 16,977 deployments which use buckets with indefinite upper bounds.

We map $q_i$ to 
the maximum
upper end-point of the memory buckets assigned to all VMs from the deployment. 
In 72\% of the deployments, all VMs are in the same memory buckets; and in  further 14\%, they belong to exactly two buckets.

To derive $p_i$, we sum the average (fractional) usage of the vCPU cores over VMs in the deployment. As the trace defines only the range of vCPUs used by a VM, we use the upper end of the VM's CPU bucket. For example, if a VM is assigned a bucket of 8-12 cores, and its average CPU usage is $0.6$, then we map that to $12\cdot 0.6 = 7.2$ virtual cores. 

We thus use the application's memory \emph{limits} (upper bounds) for memory requirements, but the application's CPU actual \emph{usage} for the load.
This is so as the CPU --- a compressible resource --- is easier to vertically-scale without much disruption to the running VM, so the CPU limit should be closer to the (perhaps high percentile of) CPU usage.

In our experiments,
we use machines with $P=32$ virtual cores and varying capacity $Q$ of RAM: 64, 96, 128, 256, 512 and 1024 GB (we use $Q$ as a parameter of the experiment).
In an appendix~\cite{appendix}, we present results for additional configurations.
We start with 64 GB, as $99.6\%$ of VMs in the trace have at most 64GB of RAM. We stop at 1024 GB, because, as we later show, memory ceases to be a critical resource from roughly this value (thus, a rational cloud provider would not have machines of this size).
We also restrict instances to applications with the total vCPU requirement of at least 1 (VMs that use less than 1 vCPUs share it with other VMs which impacts the quality of service) and at most 32 (otherwise, in configurations with 64 GB  machines, some machines would host just a single VM, regardless of the algorithm used).

After this mapping, in the 16,464 applications the median $p_i$ is $5.0$ and the mean is $8.4$; the median $q_i$ is 8~GB and the mean $q_i$ is 14~GB. Thus, a 32-core machine accepts roughly 6 ``median'' applications when considering only the CPU requirements; and between 6 (for 64 GB of RAM) and 128 (for 1024 GB of RAM) ``median'' applications when considering only the memory requirements.

We generate 50 instances. Each instance has 100 applications selected randomly  from the base~set (without replacement). We simulate each heuristic on each instance and each machine configuration. 
We measure the number of used machines which directly corresponds with the average utilization (the lower the number of machines, the higher the utilization). 
To meaningfully compare results between different instances that can have different loads, we normalize the number of machines by the classic lower bound of the average load (in our CPU/memory case, the bound is extended by the average memory requirement): $\max \left( \lceil \sum_i p_i / P \rceil, \lceil \sum_i q_i / Q \rceil \right)$.
Figure~\ref{fig:results} shows the results.

\begin{figure*}[p]
	\centering
	\subfloat[64 GB]{{\includegraphics[width=0.5\textwidth]{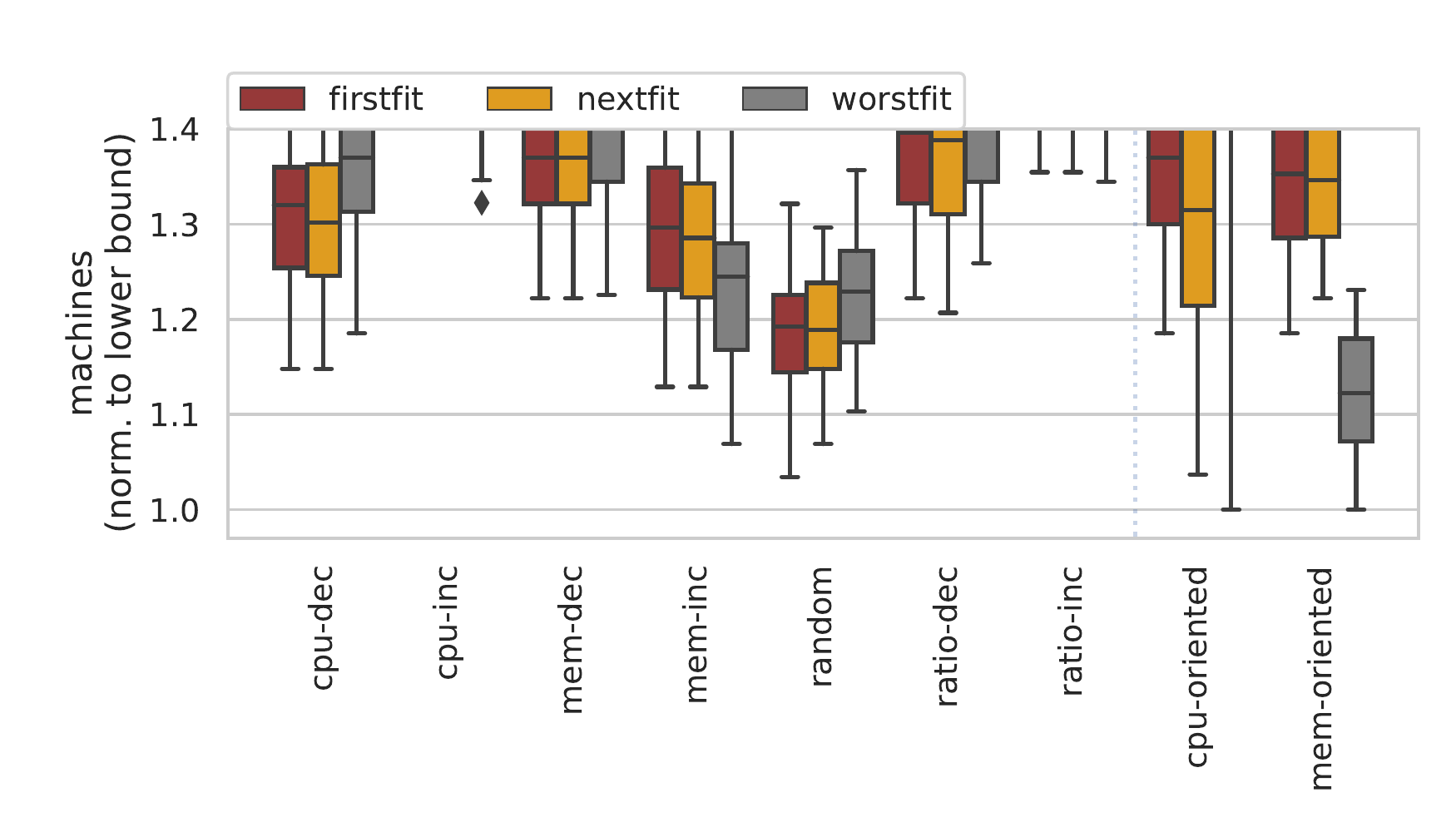}}}%
    \subfloat[96 GB]{{\includegraphics[width=0.5\textwidth]{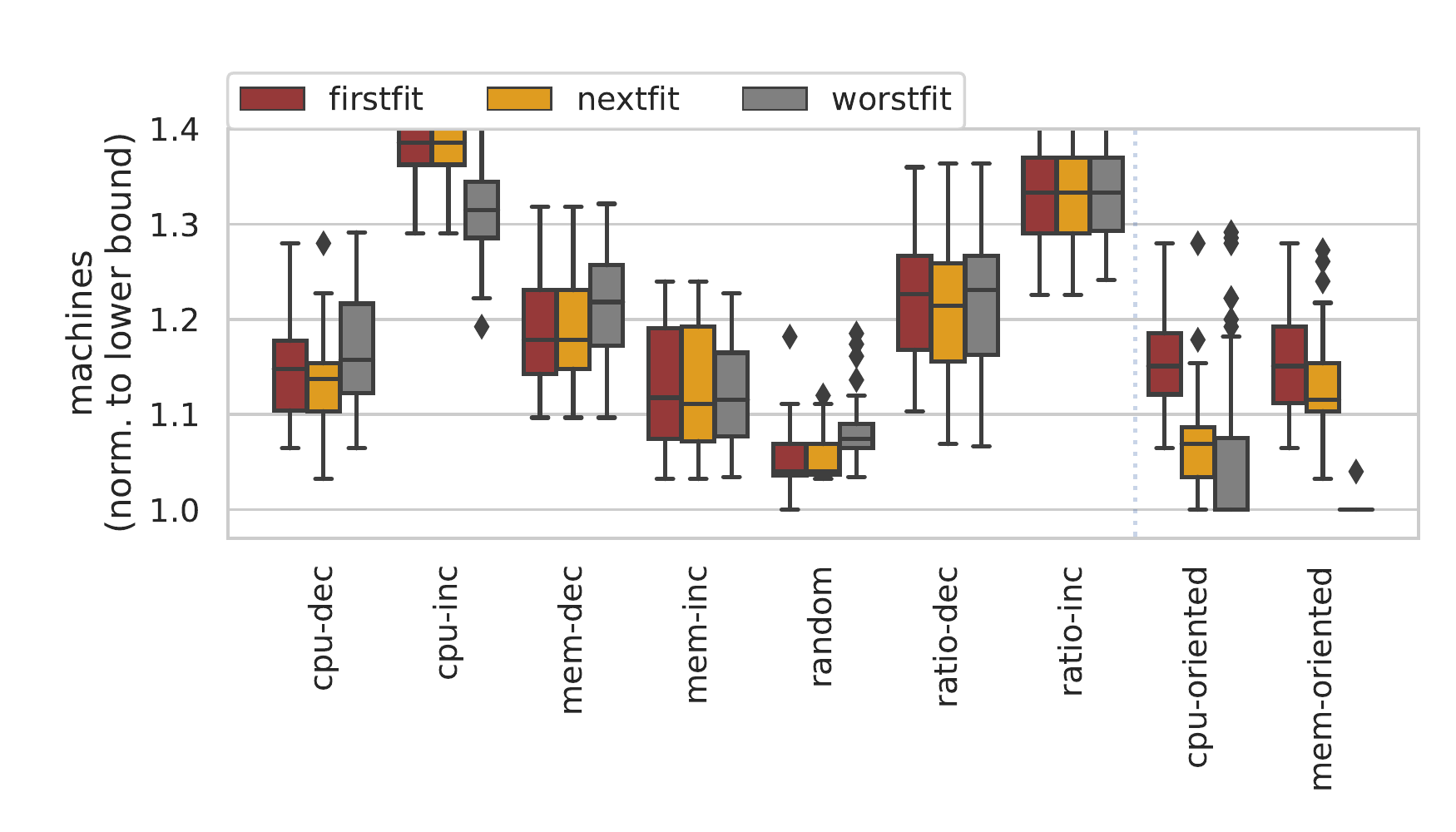}}}%
    \\
    \subfloat[128 GB]{{\includegraphics[width=0.5\textwidth]{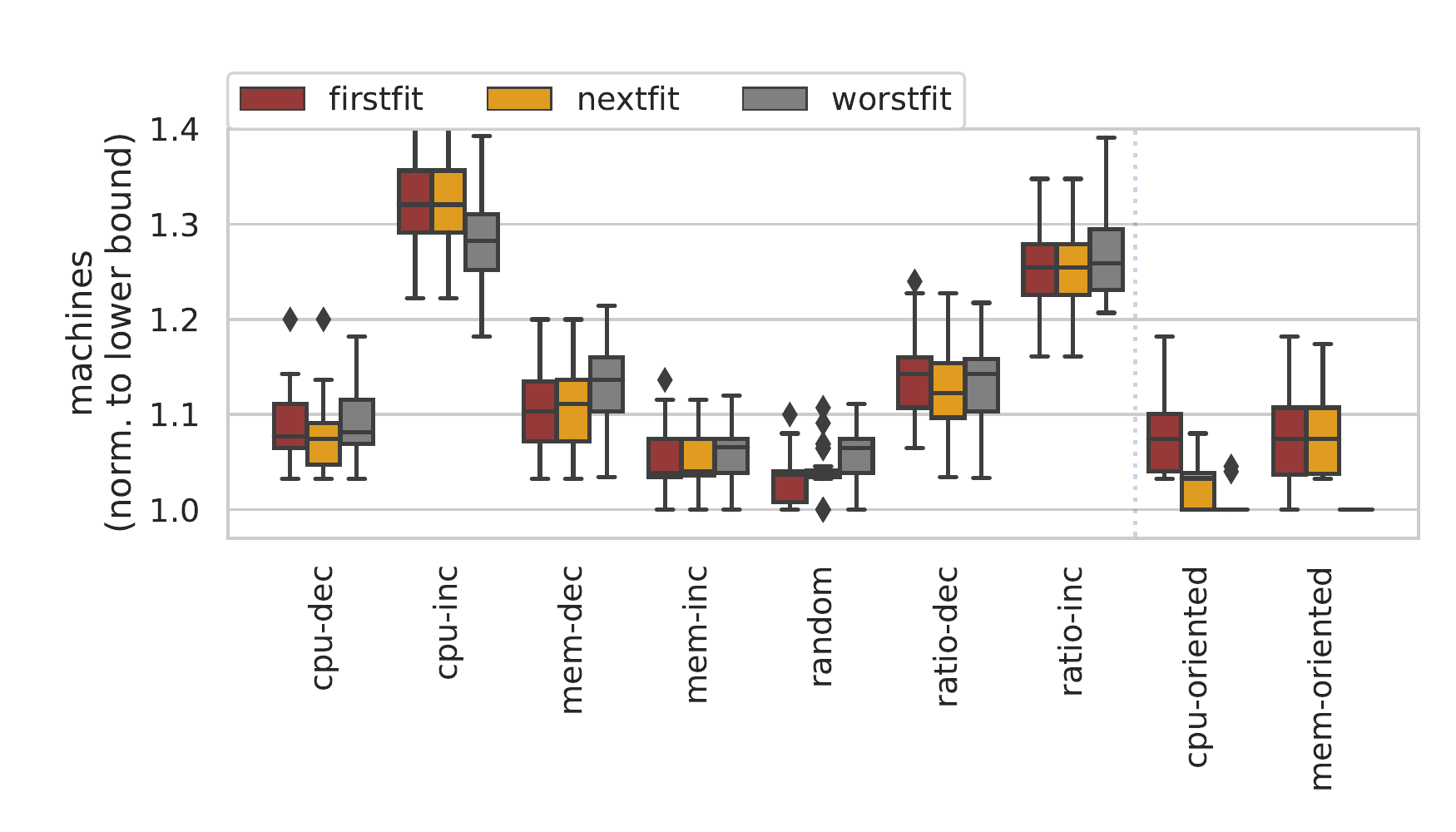}}}%
    \subfloat[256 GB]{{\includegraphics[width=0.5\textwidth]{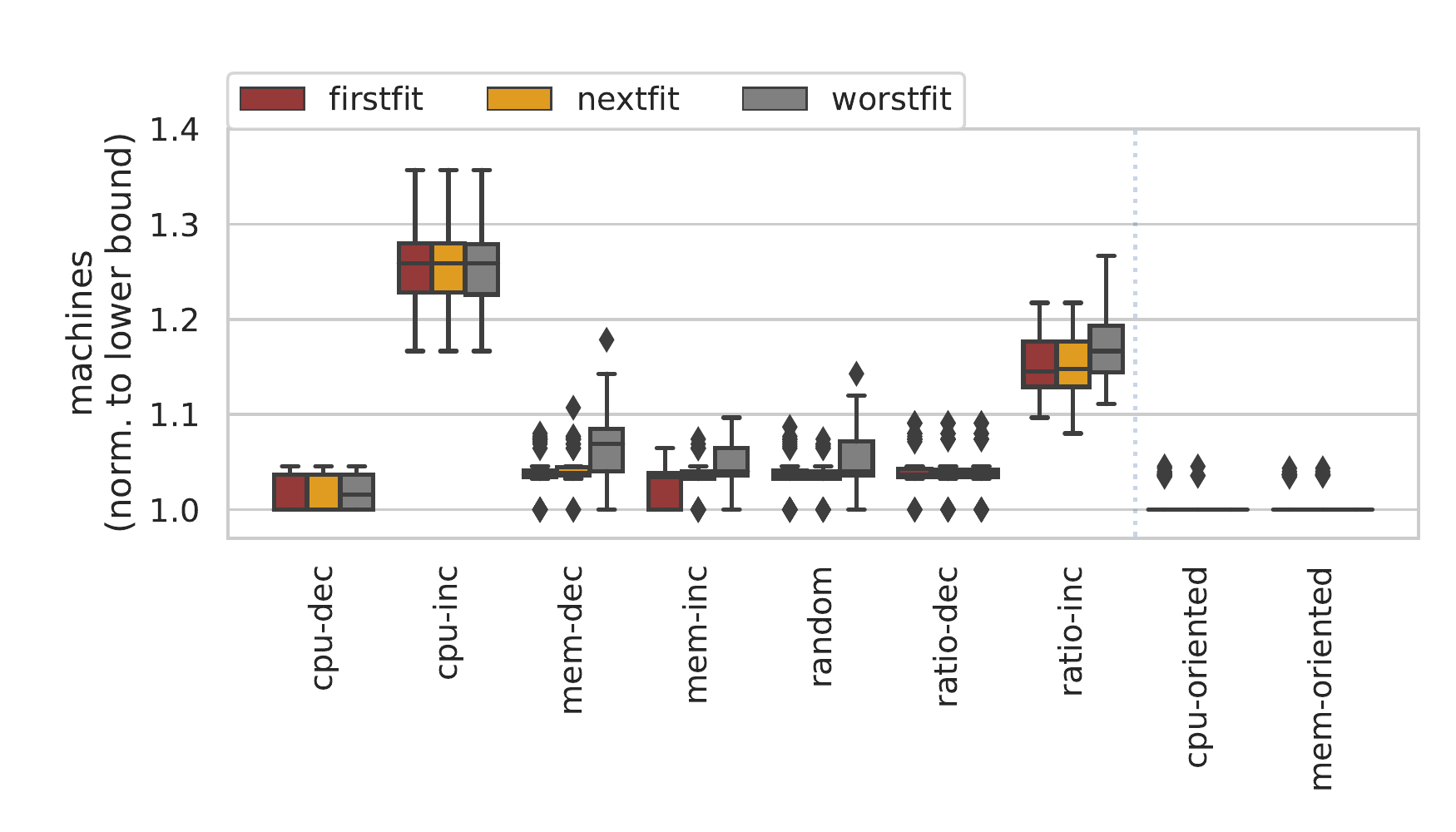}}}%
    \\
    \subfloat[512 GB]{{\includegraphics[width=0.5\textwidth]{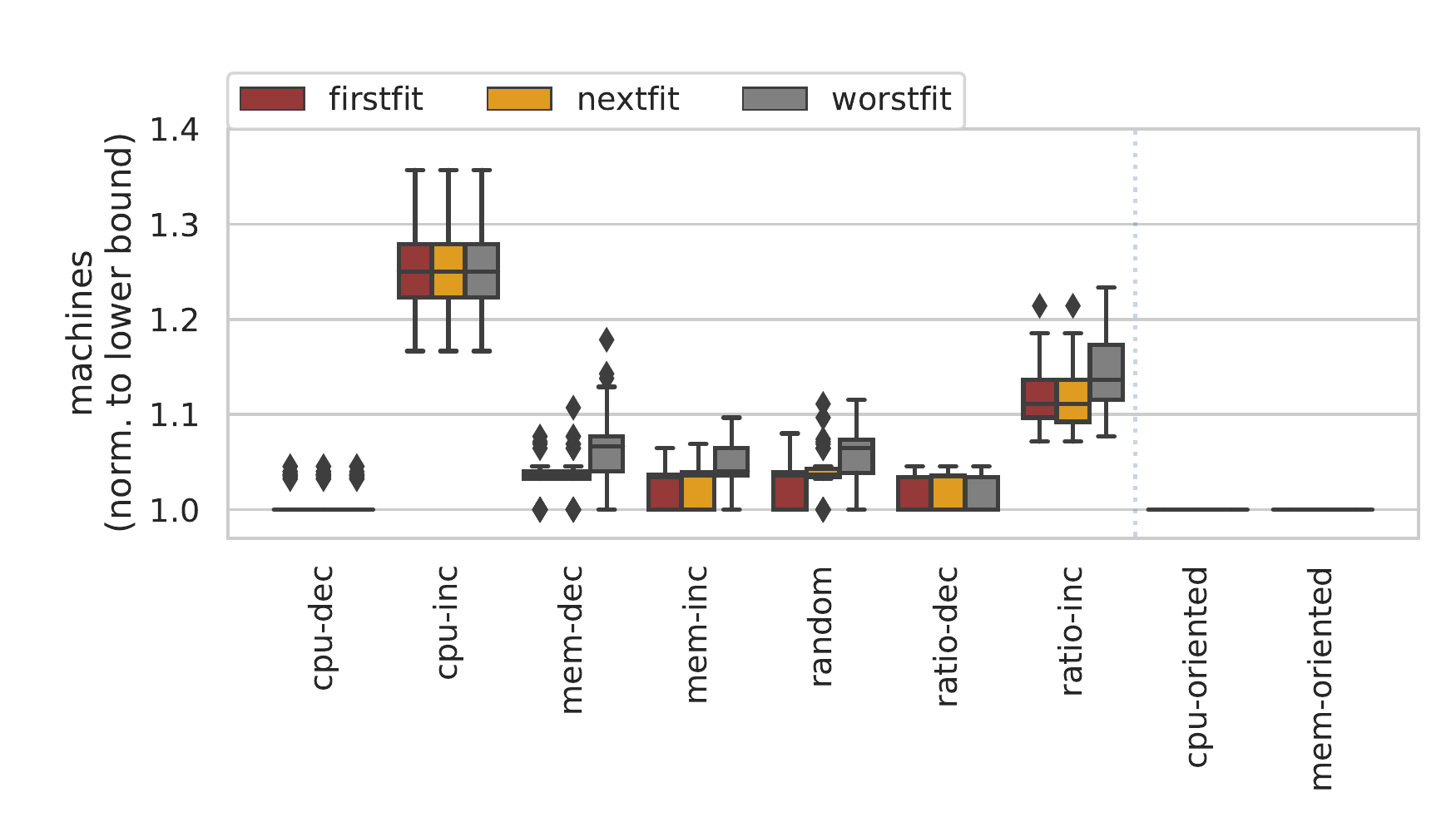}}}%
    \subfloat[1024 GB]{{\includegraphics[width=0.5\textwidth]{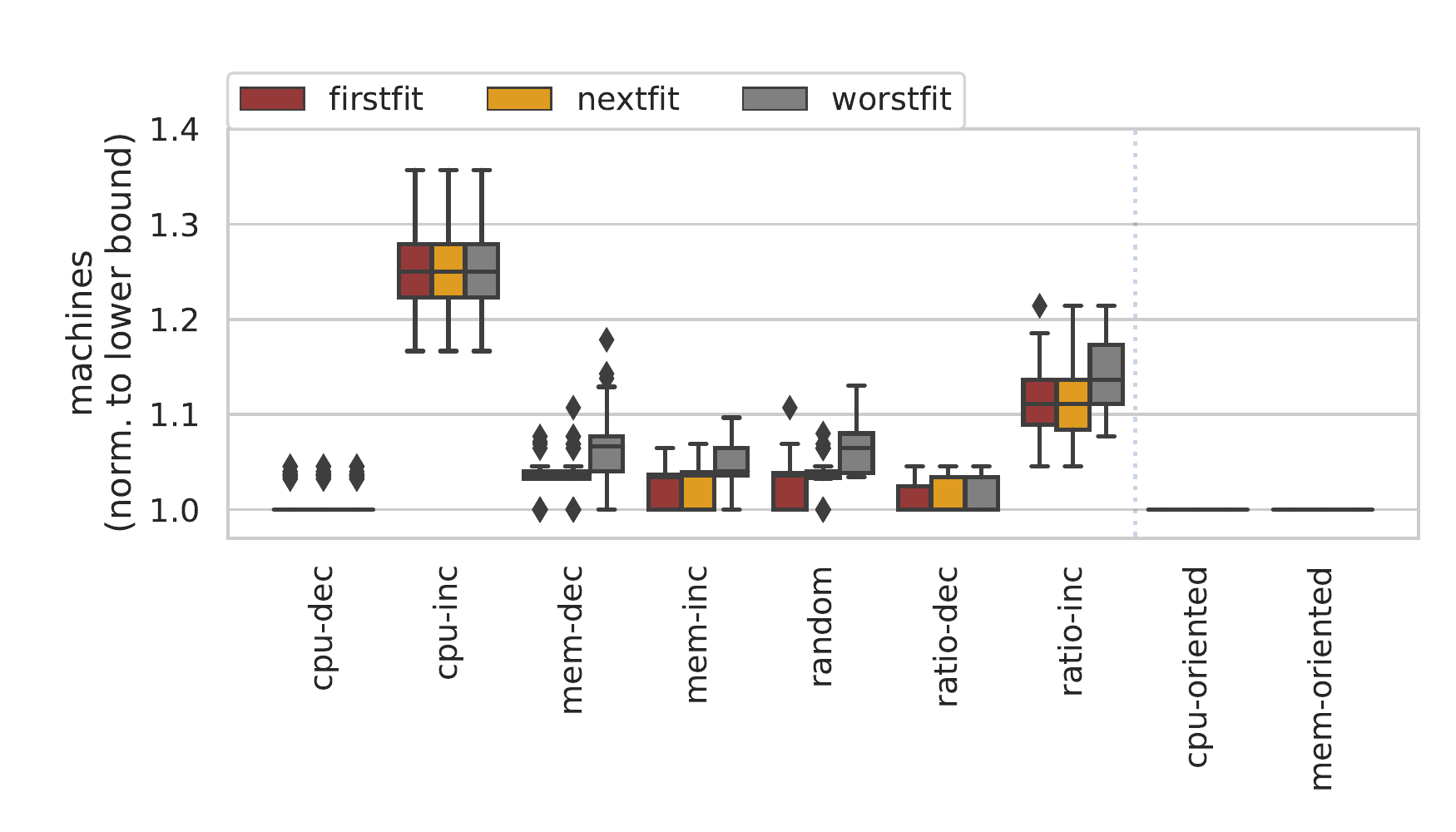}}}%	
	\caption{Comparison of performance of the algorithms for machines with 32 vCPUs and varying amount of memory. Each boxplot shows statistics over 50 individual experiments.
    The line inside the box corresponds to the median and the height of the box indicates the first and the third quartile. The whiskers extend to the most extreme data point within 1.5 $\times$ IQR (inter-quartile range).}
	\label{fig:results}
\end{figure*}

\subsection{Results}

With enough memory (512-1024 GB, see Fig.~\ref{fig:results}(ef)), our heuristics generate results equal to the lower bounds. 
The reason is that for large amounts of memory, memory is no longer a scarce commodity, thus the cost of maintaining multiple instances is negligible. Thus, one can easily obtain the lower-bound of $\lceil \sum_i p_i / P \rceil$.
This holds regardless of the algorithm used, which shows that simply having multiple instances is more important than how precisely we create them. 
On the other hand, baseline heuristics perform similarly, as the machines can be almost fully utilized when no application is split.

When the amount of memory is low (64 GB, see Fig.~\ref{fig:results}(a)), adding instances is costly. Moreover, the number of instances that may be maintained by a single machine is small. Thus, algorithm performance strongly depends on the choice of the applications that have multiple instances. In Fig.~\ref{fig:results}(a), both single-instanced and mutli-instanced heuristics perform poorly (at least 10\% worse) compared to an optimistic lower-bound. However, the differences between methods are notable. The \textsc{Mem-Oriented}+\textsc{Worst-Fit} rule provides the best results (roughly  10\% over the lower bound).

In the intermediate configurations (96-256 GB, see Fig.~\ref{fig:results}(bcd)), multi-instanced heuristics show a significant improvement over single-instanced approaches, often leading to solutions achieving the lower bound. However, it still matters which applications have multiple instances: while in the case of 96 GB or RAM, \textsc{Mem-Oriented}+\textsc{Worst-Fit} is optimal, other heuristics are not --- and the differences diminish with the increased memory capacity.

\section{Related Work}
Below, we review related work in combinatorial optimization and in VM placement. We refer to \cite{10.1145/2983575} for a survey.

The divisible load~\cite{bharadwaj2003divisible} scheduling model is related to our approach: the difference is that the divisible load does not consider the memory requirements.

When single-instanced applications of size $q_i$ need to be placed on infinitely-efficient machines of capacity $Q$ in order to minimize the number of machines used, our problem reduces to \textsc{Bin-packing}.
Some heuristics solving optimally more than 95\% of analyzed instances of the \textsc{Bin-packing} problem are known \cite{FLESZAR2002821}.
The key difference between our approach and the multi-dimensional bin-packing~\cite{christensen2017approximation} is that in multi-dimensional bin-packing all requirements are fixed.

A related problem is also the one-dimensional \textsc{Fractional Bin-packing}: objects can be split across multiple bins (which is a linear programming relaxation of the \textsc{Bin-packing} problem). Variants in which the share of each object assigned to a single bin must be the same \cite{CASTROSILVA20194}, or in which packing together two or more items make them use less resources than the sum of their individual requirements \cite{GRANGE2018331} are considered.

We consider the more general case of two-dimensional bin-packing in the multi-instanced model for which one dimension is always constant while the other one can change. Thus, we now focus on selected results that are related to these characteristics.
In \cite{Morihara1983173} bin-packing problem and the multiprocessor scheduling problem are connected: they minimize the number of workers or days required to produce certain amounts of goods.
\cite{Kamali2015} analyzes how virtual machines with dynamic workload can be managed when the amount of resources required by them changes in time.
They extend the typical load balancing with live migration to keep all the virtual machines in a limited number of active nodes.
\cite{6565979} also studies VM placement with live migrations.

VM placement with bin-packing (and its variants) are considered
in \cite{IPDPS7516028,7778998,9248589,9029147,Fatima2019,6569802}.
Specifically, \cite{10.1145/3410220.3456278} shows improved approximation factors when load prediction is available.
\cite{moges2019energy} analyzes heuristics for bin-packing taking into account the power-efficiency of the host.
\cite{7792170} analyzes VM placement with memory sharing (e.g. common libraries). 

\cite{6216364,Chung2006,Epstein2009,10.1007/978-3-540-77918-6_19,7116894} are closest to our results as they map to special cases of our model. 
\cite{6216364,Chung2006} assume $q_i=1$ and find any feasible assignment. \cite{Epstein2009} shows a (3/2)-approximation algorithm for $q_i=1$ and $Q=2$; and (7/5)-approximation for an arbitrary $Q$; \cite{10.1007/978-3-540-77918-6_19} shows a PTAS for each of these cases.
\cite{7116894} analyzes $q_i=1$ and arbitrary $p_i$ on perhaps-failing machines (with high probability the assignment must fulfill the demand of each application).
In contrast, our theoretical results solve optimally in polynomial time a special case with $q_i=q$ and $p_i=p$, i.e., equal memory and processing requirements.

\section{Conclusions}
\label{sec:conc}

We study a two-dimensional resource management problem with applications having multiple instances.
While instances of an application have the same memory requirements, the CPU load can be freely balanced between them.
From systems perspective, this approach integrates the scheduler, the autoscaler and the load balancer.

We present a number of theoretical results. We consider two related objectives: (1) minimization of the maximum load processed by a single machine; (2) minimization of the number of machines used. We demonstrate that both are NP-Hard in general, even when one of the dimensions of the problem is unit-sized. 
We also show polynomial algorithms that solve special cases with applications having equal requirements.
We also provide strong theoretical motivation for having multiple instances: when bin-packing, replication may reduce the number of machines by a tight factor of $2-\varepsilon$.

For the general case of the bin-packing problem, we propose heuristics. We simulate them on instances derived from the Azure Public Dataset. In the intermediate cases of 96-256 GB of RAM, compared with various single-instanced baselines, our heuristics reduce the number of used machines often achieving the lower bound.

\section*{Acknowledgements}

We thank the anonymous reviewers for their comments on the manuscript.

This research is supported by a Polish National Science Center grant Opus (UMO-2017/25/B/ST6/00116).

\balance

\bibliographystyle{IEEEtran}
\bibliography{article}

% Generated by IEEEtran.bst, version: 1.14 (2015/08/26)
\begin{thebibliography}{10}
\providecommand{\url}[1]{#1}
\csname url@samestyle\endcsname
\providecommand{\newblock}{\relax}
\providecommand{\bibinfo}[2]{#2}
\providecommand{\BIBentrySTDinterwordspacing}{\spaceskip=0pt\relax}
\providecommand{\BIBentryALTinterwordstretchfactor}{4}
\providecommand{\BIBentryALTinterwordspacing}{\spaceskip=\fontdimen2\font plus
\BIBentryALTinterwordstretchfactor\fontdimen3\font minus
  \fontdimen4\font\relax}
\providecommand{\BIBforeignlanguage}[2]{{%
\expandafter\ifx\csname l@#1\endcsname\relax
\typeout{** WARNING: IEEEtran.bst: No hyphenation pattern has been}%
\typeout{** loaded for the language `#1'. Using the pattern for}%
\typeout{** the default language instead.}%
\else
\language=\csname l@#1\endcsname
\fi
#2}}
\providecommand{\BIBdecl}{\relax}
\BIBdecl

\bibitem{Borg43438}
A.~Verma, L.~Pedrosa, M.~R. Korupolu, D.~Oppenheimer, E.~Tune, and J.~Wilkes,
  ``Large-scale cluster management at {Google} with {Borg},'' in \emph{EuroSys,
  Proc.}\hskip 1em plus 0.5em minus 0.4em\relax ACM, 2015.

\bibitem{Borg49065}
M.~Tirmazi, A.~Barker, N.~Deng, M.~E. Haque, Z.~G. Qin, S.~Hand,
  M.~Harchol-Balter, and J.~Wilkes, ``Borg: the next generation,'' in
  \emph{EuroSys, Proc.}\hskip 1em plus 0.5em minus 0.4em\relax ACM, 2020.

\bibitem{RCentral3132772}
E.~Cortez, A.~Bonde, A.~Muzio, M.~Russinovich, M.~Fontoura, and R.~Bianchini,
  ``Resource central: Understanding and predicting workloads for improved
  resource management in large cloud platforms,'' in \emph{SOSP, Proc.}\hskip
  1em plus 0.5em minus 0.4em\relax ACM, 2017, p. 153–167.

\bibitem{rzadca2020autopilot}
K.~Rzadca, P.~Findeisen, J.~Swiderski, P.~Zych, P.~Broniek, J.~Kusmierek,
  P.~Nowak, B.~Strack, S.~Hand, and J.~Wilkes, ``Autopilot: workload
  autoscaling at google,'' in \emph{EuroSys, Proc.}\hskip 1em plus 0.5em minus
  0.4em\relax ACM, 2020.

\bibitem{patel2013ananta}
P.~Patel, D.~Bansal, L.~Yuan, A.~Murthy, A.~Greenberg, D.~A. Maltz, R.~Kern,
  H.~Kumar, M.~Zikos, H.~Wu \emph{et~al.}, ``Ananta: Cloud scale load
  balancing,'' \emph{ACM SIGCOMM Computer Communication Review}, vol.~43,
  no.~4, pp. 207--218, 2013.

\bibitem{OpenWhisk}
\BIBentryALTinterwordspacing
A.~S. Foundation. Openwhisk documentation. [Online]. Available:
  \url{https://openwhisk.apache.org/documentation.html}
\BIBentrySTDinterwordspacing

\bibitem{brucker1999scheduling}
P.~Brucker, \emph{Scheduling algorithms}.\hskip 1em plus 0.5em minus
  0.4em\relax Springer, 2006.

\bibitem{McNaughton1959}
R.~McNaughton, ``Scheduling with deadlines and loss functions,''
  \emph{Management Science}, 1959.

\bibitem{appendix}
\BIBentryALTinterwordspacing
B.~Przybylski, P.~Żuk, and K.~Rzadca, ``{Divide ({CPU} Load) and Conquer:
  Semi-Flexible Cloud Resource Allocation. Appendix.}'' [Online]. Available:
  \url{https://www.mimuw.edu.pl/\textasciitilde
  krzadca/opal/data/bin-dl-appendix.pdf}
\BIBentrySTDinterwordspacing

\bibitem{10.1145/2983575}
I.~Pietri and R.~Sakellariou, ``Mapping virtual machines onto physical machines
  in cloud computing: A survey,'' \emph{ACM CSUR}, vol.~49, no.~3, 2016.

\bibitem{bharadwaj2003divisible}
V.~Bharadwaj, D.~Ghose, and T.~G. Robertazzi, ``Divisible load theory: A new
  paradigm for load scheduling in distributed systems,'' \emph{Cluster
  Computing}, vol.~6, no.~1, pp. 7--17, 2003.

\bibitem{FLESZAR2002821}
K.~Fleszar and K.~S. Hindi, ``New heuristics for one-dimensional bin-packing,''
  \emph{Computers \& Operations Research}, 2002.

\bibitem{christensen2017approximation}
H.~I. Christensen, A.~Khan, S.~Pokutta, and P.~Tetali, ``Approximation and
  online algorithms for multidimensional bin packing: A survey,''
  \emph{Computer Science Review}, vol.~24, pp. 63--79, 2017.

\bibitem{CASTROSILVA20194}
D.~Castro-Silva and E.~Gourdin, ``A study on load-balanced variants of the bin
  packing problem,'' \emph{Discrete Applied Mathematics}, vol. 264, pp. 4--14,
  2019.

\bibitem{GRANGE2018331}
A.~Grange, I.~Kacem, and S.~Martin, ``Algorithms for the bin packing problem
  with overlapping items,'' \emph{Computers \& Industrial Engineering}, 2018.

\bibitem{Morihara1983173}
I.~Morihara, T.~Ibaraki, and T.~Hasegawa, ``Bin packing and multiprocessor
  scheduling problems with side constraint on job types,'' \emph{Discrete
  Applied Mathematics}, 1983.

\bibitem{Kamali2015}
S.~Kamali, ``Efficient bin packing algorithms for resource provisioning in the
  cloud,'' in \emph{ALGOCLOUD, Proc.}\hskip 1em plus 0.5em minus 0.4em\relax
  Springer, 2015.

\bibitem{6565979}
W.~Song, Z.~Xiao, Q.~Chen, and H.~Luo, ``Adaptive resource provisioning for the
  cloud using online bin packing,'' \emph{IEEE Transactions on Computers},
  vol.~63, no.~11, pp. 2647--2660, 2014.

\bibitem{IPDPS7516028}
X.~Tang, Y.~Li, R.~Ren, and W.~Cai, ``On first fit bin packing for online cloud
  server allocation,'' in \emph{IPDPS, Proc}, 2016.

\bibitem{7778998}
R.~Ren, X.~Tang, Y.~Li, and W.~Cai, ``Competitiveness of dynamic bin packing
  for online cloud server allocation,'' \emph{IEEE/ACM Transactions on
  Networking}, vol.~25, no.~3, pp. 1324--1331, 2017.

\bibitem{9248589}
K.~Psychasand and J.~Ghaderi, ``High-throughput bin packing: Scheduling jobs
  with random resource demands in clusters,'' \emph{IEEE/ACM Transactions on
  Networking}, vol.~29, no.~1, pp. 220--233, 2021.

\bibitem{9029147}
W.~Wei, K.~Wang, K.~Wang, S.~Guo, and H.~Gu, ``A virtual machine placement
  algorithm combining nsga-ii and bin-packing heuristic,'' in \emph{PDCAT,
  Proc.}, 2019.

\bibitem{Fatima2019}
A.~Fatima, N.~Javaid, T.~Sultana, M.~Y. Aalsalem, S.~Shabbir, and D.~e~Adan,
  ``An efficient virtual machine placement via bin packing in cloud data
  centers,'' in \emph{Advanced Information Networking and Applications}.\hskip
  1em plus 0.5em minus 0.4em\relax Springer, 2019.

\bibitem{6569802}
L.~Eyraud-Dubois and H.~Larchevêque, ``Optimizing resource allocation while
  handling sla violations in cloud computing platforms,'' in \emph{IPDPS,
  Proc.}, 2013.

\bibitem{10.1145/3410220.3456278}
N.~Buchbinder, Y.~Fairstein, K.~Mellou, I.~Menache, and J.~S. Naor, ``Online
  virtual machine allocation with lifetime and load predictions,'' in \emph{ACM
  SIGMETRICS, Proc.}\hskip 1em plus 0.5em minus 0.4em\relax ACM, 2021, p.
  9–10.

\bibitem{moges2019energy}
F.~F. Moges and S.~L. Abebe, ``Energy-aware vm placement algorithms for the
  openstack neat consolidation framework,'' \emph{Journal of Cloud Computing},
  vol.~8, no.~1, pp. 1--14, 2019.

\bibitem{7792170}
S.~Rampersaud and D.~Grosu, ``Sharing-aware online virtual machine packing in
  heterogeneous resource clouds,'' \emph{IEEE Transactions on Parallel and
  Distributed Systems}, vol.~28, no.~7, pp. 2046--2059, 2017.

\bibitem{6216364}
O.~Beaumont, L.~Eyraud-Dubois, C.~Thraves~Caro, and H.~Rejeb, ``Heterogeneous
  resource allocation under degree constraints,'' \emph{IEEE Transactions on
  Parallel and Distributed Systems}, vol.~24, no.~5, pp. 926--937, 2013.

\bibitem{Chung2006}
F.~Chung, R.~Graham, J.~Mao, and G.~Varghese, ``Parallelism versus memory
  allocation in pipelined router forwarding engines,'' \emph{Theory of
  Computing Systems}, vol.~39, no.~6, pp. 829--849, Sep. 2006.

\bibitem{Epstein2009}
L.~Epstein and R.~van Stee, ``Improved results for a memory allocation
  problem,'' \emph{Theory of Computing Systems}, vol.~48, no.~1, pp. 79--92,
  2009.

\bibitem{10.1007/978-3-540-77918-6_19}
------, ``Approximation schemes for packing splittable items with cardinality
  constraints,'' in \emph{Approximation and Online Algorithms}, C.~Kaklamanis
  and M.~Skutella, Eds.\hskip 1em plus 0.5em minus 0.4em\relax Springer, 2008,
  pp. 232--245.

\bibitem{7116894}
O.~Beaumont, J.-A. Lorenzo, L.~Eyraud-Dubois, and P.~Renaud-Goud, ``Efficient
  and robust allocation algorithms in clouds under memory constraints,'' in
  \emph{HiPC, Proc.}, 2014.

\end{thebibliography}

\end{document}